\documentclass[journal,onecolumn]{IEEEtran}

\usepackage[pdftex]{graphicx} 

\usepackage{calc} 
\usepackage{enumitem} 

\usepackage[a4paper, lmargin=1.5cm, rmargin=1.5cm, tmargin=3.35cm, bmargin=2.83cm]{geometry} 

\usepackage[all]{nowidow} 
\usepackage[protrusion=true,expansion=true]{microtype} 

\usepackage{xspace}

\makeatletter
\DeclareRobustCommand\onedot{\futurelet\@let@token\@onedot}
\def\@onedot{\ifx\@let@token.\else.\null\fi\xspace}

\def\ie{i.e\onedot} 
\def\cf{cf\onedot}

\def\supp{\text{supp}}
\def\pr{{\rm{Pr}}}
\makeatother

\usepackage{amsmath, nccmath}
\usepackage{amssymb,amsthm}
\usepackage{amsfonts}
\usepackage{cite}
\usepackage{tikz}
\usepackage{subcaption}

\usepackage{overpic}
\usepackage{float}
\usepackage{bm}
\usepackage{bbm}

\usepackage{pgfplots}
\pgfplotsset{compat=newest,compat/show suggested version=false}

\usepackage{booktabs}
\usepackage{colortbl}
\usepackage{color}
\usepackage{blkarray, bigstrut}
\usepackage{mathtools}
\usepackage{proof-at-the-end}
\usepackage{dsfont}
\DeclarePairedDelimiter{\floor}{\lfloor}{\rfloor}
\DeclarePairedDelimiter{\ceil}{\lceil}{\rceil}

\theoremstyle{plain}
\newtheorem{theorem}{Theorem}
\newtheorem{lemma}{Lemma}
\newtheorem{corollary}{Corollary}
\newtheorem{proposition}{Proposition}

\theoremstyle{definition}

\newtheorem{example}{Example}
\theoremstyle{remark}
\newtheorem{remark}{Remark}

\newcommand{\normx[1]}{\Vert#1\Vert}
\newcommand{\mc}[1]{\mathcal{#1}}
\newcommand{\mb}[1]{\mathbf{#1}}
\newcommand{\bb}[1]{\mathbb{#1}}

\newcommand{\utag}[2]{\mathop{#2}\limits^{\text{(#1)}}}
\newcommand{\uref}[1]{(#1)}

\allowdisplaybreaks

\usepackage{setspace}
\begin{document}

\title{Lossy Compression for Sparse Aggregation}


\author{
    Yijun Fan, Fangwei~Ye, and Raymond~W.~Yeung, 
\thanks{
This paper was presented in part in "Lossy Compression for Sparse Aggregation,"  IEEE Information Theory Workshop, Shenzhen, China, 2024.

Y.~Fan and R.~W.~Yeung are with the Department of Information Engineering, The Chinese University of Hong Kong, Hong Kong (emails: \{yijunfan, whyeung\}@ie.cuhk.edu.hk).

F.~Ye is with the College of Computer Science and Technology, Nanjing University of Aeronautics and Astronautics, Nanjing 211106, China (email: fangweiye@nuaa.edu.cn).

}
}

\maketitle

\begin{abstract}
We consider the problem of transmitting sparse local updates to the server in a distributed learning system. Specifically,  
the system consists of $n$ clients, each possessing a $k$-sparse $d$-dimensional local model, and a central server responsible for aggregating the clients' models into a global model. 
The goal is to characterize the tradeoff between the communication cost in the transmission from the clients to the server and the accuracy in aggregating the global model. 

We propose a compression scheme for sparse local models by concatenating a covering method and a sketching method. We also present a converse based on f-divergence, which strengthens the conventional Fano-type lower bounds. The proposed lower bound is tight for the frequency estimation case, that is, each coordinate takes values in a binary alphabet.  
For general alphabets, the proposed achievable schemes remain suboptimal relative to the converse bounds, indicating that a complete characterization of the communication-accuracy tradeoff requires further investigation.
\end{abstract}

\section{Introduction}

The rapid growth of data size and model dimension has made communication a central bottleneck in distributed systems. In federated learning \cite{kim2021federated,kairouz2021advances,vargaftik2022eden}, distributed SGD \cite{gandikota2022vqsgd,mcmahan2017communication}, and sensor networks for the Internet of Things (IoT) \cite{sarkar2014diat}, many clients repeatedly transmit high-dimensional local data to a central server for aggregation. When communication resources are limited, exact transmission may be infeasible, and it becomes important to design compressed representations that preserve the accuracy of the aggregate. This paper focuses on sparse data, which arises in communication-constrained federated learning \cite{strom2015scalable,aji2017sparse,sattler2019sparse,li2021communication,rothchild2020fetchsgd,tsuzuku2018variance,amiri2020federated}, submodel learning \cite{niu2019secure,xu2021deepreduce}, heavy-hitter estimation \cite{acharya2019communication,zhou2022locally}, and histogram estimation \cite{chen2023communication}. Under these scenarios, transmitting sparse data entry-wise would incur substantial communication overhead, motivating a good representation of sparse data in
high dimensions.

The communication-accuracy tradeoff for distributed estimation has been studied from several perspectives. Classical rate-distortion theory characterizes fundamental rates under distortion constraints \cite{slepian1973noiseless}, while recent non-asymptotic works study communication-constrained parameter estimation \cite{zhang2013information,kipnis2022mean} and distribution learning \cite{han2018distributed,acharya2020inference,huang2022collaborative,nam2023optimal}. Distributed mean estimation for high-dimensional vectors has also been investigated using vector quantization \cite{chen2022breaking,isik2024exact} and entry-wise quantization \cite{suresh2017distributed,reisizadeh2020fedpaq,haddadpour2021federated}, including schemes without prior knowledge of the local vector distribution \cite{suresh2017distributed} and extensions to correlated vectors \cite{suresh2022correlated}. These methods are well suited to dense vectors, where the optimal communication cost typically scales linearly with the dimension. For sparse vectors, however, the relevant complexity should depend on the sparsity and the allowed distortion, rather than only on the full dimension.

\begin{figure}
    \centering
    {\includegraphics[width=0.4\linewidth]{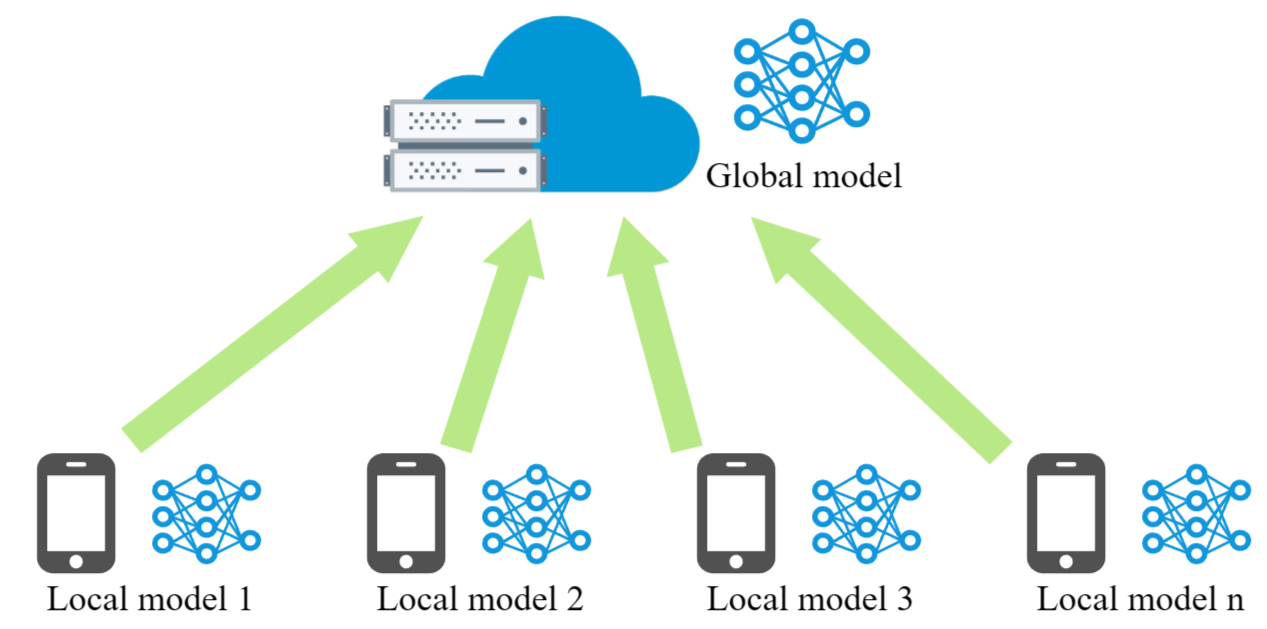}}    
    \caption{Overview of a distributed system in federated learning consisting of $n$ clients and a server. The global model will then be broadcast to all clients. We only consider the uplinks as demonstrated in this picture.}
\label{fig::fl}
\end{figure}

Several compression methods have been used for sparse aggregation. A direct approach is to transmit the indices and values of nonzero entries, as in sparse gradient compression \cite{strom2015scalable} and subsequent federated learning works \cite{sattler2019sparse,tsuzuku2018variance,niu2019secure,amiri2020federated,xu2021deepreduce}. Another line of work uses sketching methods from the data-streaming literature, such as Count Sketch \cite{charikar2002finding} and Count-Min Sketch \cite{cormode2005improved}, which are naturally suited to frequency-type estimation and distributed aggregation. Sketching has also been studied under security and privacy considerations \cite{chen2023communication,liu2019enhancing,zhou2022locally}. From a coding perspective, algebraic constructions based on Reed-Solomon codes provide deterministic sparse recovery guarantees \cite{das2013finite} and have been applied to distributed sparse aggregation \cite{pan2022machine}. These approaches motivate the present work, but most existing guarantees focus either on exact recovery or on sketching performance without explicitly characterizing the lossy sparse aggregation tradeoff.

In this paper, we study sparse aggregation under an $\ell_1$ distortion and characterize the communication-accuracy tradeoff from both achievability and converse perspectives. On the achievability side, we propose a covering-based compression scheme showing that distortion reduces the effective sparsity from $k$ to approximately $k-D/q$, and we further combine this scheme with Count Sketch and Count-Min Sketch to obtain improved randomized compression schemes. On the converse side, we develop an f-divergence-based lower bound that strengthens classical Fano-type arguments used in communication-constrained estimation \cite{zhang2013information,chen2023communication,zhou2022locally}. In the binary case, corresponding to distributed frequency estimation, the achievable and converse bounds match up to lower-order terms. For general alphabets, the full tradeoff is not yet completely characterized due to the gap remained.

This paper is organized as follows. Section~\ref{sec::formulation} presents the system model and the communication-accuracy tradeoff considered in this work. Section~\ref{sec::related_works} reviews the necessary background on compressive sensing and sketching methods. Section~\ref{sec::main_results} previews the main contributions of this paper. Section~\ref{sec::achievability} provides the achievability results, including the covering-based compression scheme, the analyses of Count Sketch and Count-Min Sketch, and their combination with the covering scheme. Section~\ref{sec::converse} proves converse bounds based on an f-divergence information measure and compares them with the achievable communication costs. We conclude the paper and outline future directions in Section~\ref{sec::discussion}.

Throughout this paper, the real number set, the integer set, and positive integer numbers are denoted by $\bb R$, $\bb Z$, and $\bb Z_{+}$, respectively.
We denote the $j$-th entry of a vector in brackets, such as $X(j)$. The estimate of $X$ is denoted by $\hat{X}$. 
We denote the support of a $d$-dimensional vector $X$ by $\text{supp}(X):=\{j\in\{1,2,\ldots,d\}: X(j)\neq 0\}$. The indicator function of some condition $A$ is denoted by $\mathds{1}(A)$, where $\mathds{1}(A)=1$ if and only if the condition $A$ is true and is $0$ otherwise. The $\ell_0$- and $\ell_1$-norm of $X$ are $\|X\|_0$ and $\| X\|_1$.
For any $q\in\bb Z_+$, $[q]$ represents the integer number set $\{1,\ldots,q\}$ and $[q_1:q_2]$ represents the set $\{q_1,q_1+1,\ldots,q_2\}$ for any $q_1\leq q_2$, $q_1,q_2\in\bb Z$. We denote the function $\max(x,0)$ by $(x)^+$. All logarithms throughout this paper are in the base $2$. 

\section{Problem Formulation}
\label{sec::formulation}


We consider a classical distributed learning setting with $n$ clients and a central server. Assume the clients are indexed by $1,\ldots,n$. In every round, each client will send its local model $ X_i$ to the server and the server will aggregate them to obtain the global model. The clients are not allowed to communicate with each other.

In this paper, the local model $ X_i$ is represented by a $d$-dimensional (discrete) vector, where each coordinate of the vector takes values in $[0:q]$. Then $ X_i\in[0:q]^d$ for all $i=1,2,\ldots,n$, where $$[0:q]^d:=\{ x\in \bb R^d: x(j)\in[0:q] \text{ for all }j=1,\ldots,d\}.$$ One may think of it as the local model after the quantization, where the vector (model) takes values in real numbers in general. We assume no prior knowledge about the distributions of local models.

The (normalized) aggregated global model is denoted by $ U$, given by,
\begin{equation}
\label{eq::def_u}
     U = \frac{1}{n} \sum_{i=1}^n  X_i.
\end{equation} 
The collection of all possible aggregated results $ U$ is denoted by $\mc U$.

We are particularly interested in the case that the local model is sparse, as in many applications such as \cite{strom2015scalable,aji2017sparse, sattler2019sparse,li2021communication, rothchild2020fetchsgd, tsuzuku2018variance, amiri2020federated,chen2023communication, zhou2022locally}. Specifically, each $d$-dimensional vector $ X_i$ is $k$-sparse, meaning that $ X_i$ for $i=1,\ldots,n$ takes values in the alphabet 
\begin{equation*}
    \mc X_{d,k,q}:=\{ x\in[0:q]^d:\| x\|_0\leq k\}.
\end{equation*}
For sparsity, we assume that $d \gg n k$, that is, both the local models $ X_i$'s and aggregated global model $U$ are sparse. 

For communication efficiency, each client may compress its sparse local model before transmission. Specifically, suppose that each client $i$ encodes its local model $ X_i$ by a common encoder $\mc E$ into a compressed model $\mc E(X_i)\in\mc Y$ to the server, where $\mc Y$ is some general finite alphabet. Since we make no assumption on the prior of local models, the number of bits required in transmitting the compressed model $Y_i$ is $\lceil \log |\mc Y|\rceil$.
After receiving the compressed local models, the server reproduces the aggregated global model \begin{equation}
    \label{eq::def_hat_u}
    \hat{U}=\mc D(\mc E( X_1),\ldots,\mc E( X_n))
\end{equation}
by a decoder $\mc D$. The system is fully described by the pair of encoding and decoding functions $\pi =(\mc E, \mc D)$. For simplicity, we write \[
\mc E(X^n) = (\mc E(X_1),\ldots,\mc E(X_n))
\] throughout this paper. Fig.~\ref{fig::model} is an illustration of the system.

We are interested in solving the problem of minimizing the communication cost under the reproduction accuracy constraint. 
The communication cost per client is measured by $\lceil \log |\mc Y|\rceil$, the number of bits transmitted per client.
The total communication cost of a scheme $\pi=(\mc E,\mc D)$ is then $n\lceil\log |\mc Y|\rceil$,
which is the total number of bits to transmit $n$ compressed local models $\mc E(X^n)$, denoted by $C(\pi)$. 

The accuracy of the reproduced aggregated global model is measured by $\ell(U, \hat{U})$, where $\ell(\cdot, \cdot)$ is the loss function and we consider $\ell_1$-loss $\| U-\hat{ U}\|_1$ throughout this paper. For a scheme $\pi$, we say the reproduced global model $\hat{ U}$ is $(D,\delta)$-accurate, $\delta\in(0,1)$, if and only if
\begin{equation}
\label{eq::accuracy}
    Pr(\|U - \hat{ U}\|_1 \leq  D) \geq 1-\delta.
\end{equation} 

We are interested in solving the problem 
\begin{equation}
\label{eq::def_cd}
    C(D,\delta) = \inf_{\pi\in\Pi(D,\delta)} C(\pi),
\end{equation}
where $\Pi(D,\delta)$ denote the collection of all schemes $\pi$ reproducing $(D,\delta)$-accurate $\hat{U}$ for any $n\geq 1$. For a fixed $\delta$, $C(D,\delta)$ can be regarded as the tradeoff between the communication cost and the distortion.

\begin{figure}
    \centering
    {\includegraphics[width=0.4\linewidth]{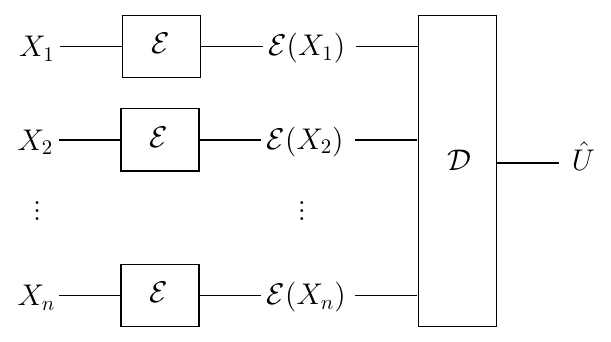}}    
    \caption{Each client $i$ encodes its sparse vector $ X_i$ into $M_i=\mc E( X_i)\in\mc Y$ and transmits it to the server end, which computes the aggregation result $\hat{ U}$ by $\mc D(\mc E( X_1),\ldots,\mc E( X_n))$.}
\label{fig::model}
\end{figure}

\begin{remark}
    When the distribution of local models is unknown, one cannot assume that local supports overlap in a favorable way. In the worst case, the nonzero coordinates across clients may be largely disjoint, so that the aggregate support is essentially the union of all local supports. As a result, the effective global sparsity can scale on the order of $nk$, indicating that the number of relevant global heavy hitters cannot be assumed to be significantly small. 
\end{remark}

\section{Preliminary and Background}
\label{sec::related_works}

\noindent\textbf{Compressive Sensing:}
Compressive sensing is a standard tool for compressing sparse vectors by replacing the original high-dimensional representation with a smaller number of linear measurements. For finite-alphabet sparse vectors, this idea can be implemented using coding constructions. In particular, Das and Vishwanath showed that when the measurement matrix is chosen according to the parity-check matrix of a suitable Reed-Solomon code, any $k$-sparse vector can be recovered exactly from approximately $2k$ measurements over a sufficiently large finite field~\cite{das2013finite}. Consequently, transmitting the measurements requires on the order of $2k\log(dq)$ bits per client.

This approach has also been used in distributed sparse aggregation, where each client sends linear measurements of its local model to the server~\cite{pan2022machine}. The main advantage of such schemes is that they provide deterministic, worst-case recovery guarantees while exploiting sparsity. However, these guarantees are designed for exact reconstruction. Therefore, the communication cost remains tied to the original sparsity $k$, and the resulting schemes do not directly characterize how the required communication decreases with distortion.
\vspace{1em}

\noindent\textbf{Sketching Methods:}
It is found in \cite{rothchild2020fetchsgd} \cite{ahn2024model} \cite{liu2019enhancing} 
that sketching methods, which are first proposed for data streaming, can be applied to federated learning to hide clients' identity and compress the local models. 
The work \cite{chen2023communication} considers binary $1$-sparse vectors using a modified version of Count Sketch (CS) with threshold. On the other hand, both \cite{zhou2022locally} and \cite{naim2022private} simplify CS and apply it to compress $k$-sparse vectors taking discrete finite values.  

Sketching methods offer two key advantages. First, their additive structure allows clients to compress data locally, while the server can directly sum the compressed messages to obtain a sketch of the aggregate. This is useful for identifying global patterns such as heavy hitters. Second, because sketches are randomized linear combinations rather than explicit coordinate transmissions, they implicitly hide local support information, making them naturally more privacy-friendly.

The advantage from the additive structure, however, is weakened under the unknown local-model distribution assumption. In the worst case, the supports of the local models may be essentially disjoint, so the sparsity of the aggregate can scale on the order of $nk$. As a result, the communication cost of sketching methods may grow significantly with $n$, limiting the benefit of directly aggregating the sketches.

\section{Main Results}
\label{sec::main_results}

This paper studies the communication-accuracy tradeoff for distributed aggregation of sparse vectors by providing improved achievable schemes and sharper converse bounds. The main structural message is that distortion reduces the number of coordinates that must be communicated accurately. Thus, the relevant parameter is not simply the original sparsity $k$, but an effective sparsity on the order of $k-D/q$, which determines the dominant communication cost.

On the achievability side, we first develop a deterministic covering-based compression scheme. For every local vector $x$, the encoder produces a compressed vector $y(x)$ such that
\[
    \|x-y(x)\|_1 \leq D .
\]
The construction keeps only the large entries of $x$, and every compressed vector satisfies
\[
    \|y(x)\|_0 \le \left\lceil k-D/q\right\rceil .
\]
Consequently, the number of messages required by the covering code obeys
\[
    \log |\mc Y(D)| \le \left\lceil k-D/q\right\rceil \log(dq) + \mc O(1),
\]
allowing an approximately linear decrease of the communication cost with the distortion. This improves the Reed--Solomon type schemes.

We then analyze sketching-based compression schemes, including Count Sketch and Count-Min Sketch, whose communication costs scale polynomially with the sparsity level and can be inefficient in worst-case sparse aggregation. To address this, we combine covering and sketching in a two-stage scheme. The total distortion is split as $D=D_1+D_2$: the covering stage uses $D_1$ to reduce the sparsity, while the sketching stage uses $D_2$ to compress the remaining vector. For example, when $D \in [mq,(m+1)q)$, the combined covering and Count-Min Sketch scheme yields a bound of the form
\[
    C(D,\delta) \lesssim
    \begin{cases}
        n\,\frac{k^2}{m}\log \frac{d}{\delta}, & \text{for small } m\\
        n\,(k-m)\log \frac{d}{\delta}, & \text{for large } m.
    \end{cases}
\]
This reveals a transition from a sketching-dominated regime at small distortion to an effective-sparsity-dominated regime at larger distortion.

On the converse side, we prove a lower bound based on f-divergence, which strengthens classical Fano-type arguments. For any $(D,\delta)$-accurate scheme with alphabet $\mathcal{Y}$, the bound relates $|\mathcal{Y}|$ to the mass $P_{\max}(D)$ of the largest distortion $\ell_1$ ball around the global model:
\begin{align*}
    \mathrm{f}(|\mathcal{Y}|^n) \gtrsim (1-P_{\max}(D)) \mathrm{f}\left(\frac{\delta}{1-P_{\max}(D)}\right)
    + P_{\max}(D)\mathrm{f}\left(\frac{1-\delta}{P_{\max}(D)}\right).
\end{align*}
By choosing suitable $f$, this framework yields tighter lower bounds than those obtained from KL-based Fano's inequality. In particular, for the binary case $q=1$, the achievable and converse bounds essentially match, giving
\[
    C(D,0) \approx n(k-D)\log d.
\]
Therefore, for frequency estimation, the communication-accuracy tradeoff is fully characterized up to lower-order terms, while for general alphabets the proposed bounds substantially narrow the gap between achievability and converse.

\section{Achievable Communication Cost}
\label{sec::achievability}


\subsection{Covering Compression}
\label{sec::covering}
We start by presenting the deterministic compression scheme, which is referred to as the covering scheme in this paper. The geometric interpretation of this scheme is to find a subset $\mc Y \subseteq [0:q]^{d}$ such that for every $ x\in \mc X_{d,k,q}$, we can map it to a $ y\in\mc Y$ satisfying $\| x- y\|_1\leq D$. That is, every local model vector $X$ could be covered by a $Y$ such that their distance $\|X-Y\|_1\leq D$. 
Noting that the communication cost increases with the size of the subset $\mc Y$, our goal is to design a $\mc Y$ with a small cardinality. In the rest of this paper, we will denote the subset constructed by the proposed covering scheme by $\mc Y(D)$ for any distortion $D\geq 0$.

A detailed description of our scheme is as follows. Given $D \geq 0$, for any vector $ x \in\mc X_{d,k,q}$, we try to map it to a vector $y$, whose $\ell_1$-distance from the original vector $x$ is below the threshold $D$ with probability $1$. 
To facilitate our discussion, we let $x(0)=\infty$ and refer to the vector $[x(0),x(1),\ldots,x(d)]$ as the extension of $x$.
Specifically, each client first reorders the extension of $x$, \ie, its local model $x$ together with $x(0)$, by a permutation $s$ on $[0:d]$ such that for any $i$ and $j$, 
\begin{align}
\label{eq::mapping_order}
s(i)<s(j) &\text{ if and only if }
\left\{ \begin{array}{ll}
     x(i)>x(j), \text{ or }\\
     i<j \text{ when } x(i)=x(j)
\end{array} \right.
\end{align} 
and let $\check{x}(s(i))=x(i)$ for $i=0,1,\ldots,d$. Note that $s$ is uniquely defined by \eqref{eq::mapping_order}, $s(0)=0$, and $\check{x}(0)=x(0)=\infty$.
For clarity, we will refer to the vector $[\check{x}(1),\check{x}(2),\ldots,\check{x}(d)]$ as $\check{x}$ and the vector $[\check{x}(0),\check{x}(1),\ldots,\check{x}(d)]$ as the extension of $\check{x}$.
In other words, $\check{x}$ is obtained by rearranging the coordinates of $x$ in decreasing order and preserving the original order if multiple coordinates share the same value. 
For the extension of $\check{x}$, let $s^*$ be the largest $r\geq 0$ satisfying \begin{equation}
\label{eq::mapping_rule}
    \check{x}(r)+\check{x}(r+1)+\ldots+\check{x}(d)> D,
\end{equation}
with $s^*=s(i^*)$, where $i^*\in[0:d]$. If \eqref{eq::mapping_rule} is not satisfied by any $r\in[d]$, then $s^*$ is not defined. Note that $i^*$ is uniquely determined by $s^*$.
Also, note that by virtue of \eqref{eq::mapping_rule}, we have $\check{x}(s^*)>0$.
Then map the local model $x$ to $ y(x)$, where, for $i\in[d]$, 
\begin{equation}
\label{eq::mapping_rule_2}
 y(x)(i) =
\left\{ \begin{array}{ll}
     \check{x}(s(i))=x(i) &  \text{ for } i\text{ s.t. } s(i)\leq s(i^*)\\
     0 &  \text{ for } i\text{ s.t. } s(i)\geq  s(i^*)+1.
\end{array} \right.
\end{equation}
See Example~\ref{example::mapping} below for an illustration.




For notational simplicity, in the rest of this section, given a vector $y\in[0:q]^d$ and $y\neq \mb 0$, we define \begin{align}
    \label{eq::min_value}m(y) &= \min\{y(i): i\in[d], y(i)>0\}\\
    \label{eq::set_a_index} i_m(y) &= \max\{i\in[d]:y(i)=m(y)\}\\
    \label{eq::set_a_def} \mc A(y) &= \{i\in[d]: i> i_m(y) \text{ and }y(i)=0\},
\end{align}
\ie, $m(y)$ is the value of the smallest non-zero entry in $y$, $i_m(y)$ is the index of the last occurrence of the smallest non-zero entry in $y$, and $\mc A(y)$ is the index collection of the zero entries in $y$ that are after the last occurrence of the value $m(y)$. 


We use the following example as an illustration of the mapping from $ x\in\mc X_{d,k,q}$ to $ y(x)$ under a given distortion $D$, as well as the quantities defined in \eqref{eq::min_value} to \eqref{eq::set_a_def}.
\begin{example}
\label{example::mapping}
    Suppose the client observes a local model \[
    x=(0,0,2,0,1,7,2,0,0).
    \]
    To satisfy \eqref{eq::mapping_order}, \eqref{eq::mapping_rule}, and \eqref{eq::mapping_rule_2}, the mapping $s$ is given by \begin{align*}
        s(6)=1, s(3)=2, s(7)=3, s(5)=4,s(1)=5, s(2)=6, s(4)=7, s(8)=8, s(9)=9,
    \end{align*}
    and the corresponding $\check{x}$ with $\check{x}(s(i))=x(i)$ is \[
    (7,2,2,1,0,0,0,0,0).
    \]
    
    Suppose $D=4$. Since \[
    \sum_{r=2}^9\check{x}(r)=5>D \ \ \ \text{and}\ \ \ \sum_{r=3}^9 \check{x}(r)=3<D,
    \]
    we have $s^*=2$ and $i^*=3$. Then in the compressed local model $y(x)$, $y(x)(i)=x(i)$ for $i$ such that $s(i)\leq s^*$ ($i=6,3$) and $y(x)(i)=0$ for $i$ such that $s(i)\geq s^*+1$ ($i=7,5,1,2,4,8,9$). Hence, \[
    y(x) = (0,0,2,0,0,7,0,0,0).
    \] 
    For this vector $y(x)$, \begin{align*}
        m(y(x)) = 2, \,i_m(y(x)) = 3, \text{ and } \mc A(y(x)) = \{4,5,7,8,9\}.
    \end{align*}

    Suppose $D=15$. Since \[
    \sum_{r=0}^9 \check{x}(r)=\infty>D\ \ \ \text{and}\ \ \ \sum_{r=1}^9 \check{x}(r)=12<D,
    \]
    we have $s^*=0$ and $i^*=0$. Hence, for all $i\in[1:d]$, $s(i)\geq s(i^*)+1$. By \eqref{eq::mapping_rule_2}, $y(x)(i)=0$ for all $i\in[1:d]$. That is, $y(x)=\mb 0$ and $m(y(x))$, $i_m(y(x))$, and $\mc A(y(x))$ are not defined in this case.
\end{example}
\vspace{1em}

\begin{remark}
\label{remark::D_le_kq}
    When the distortion $D$ satisfies $kq\leq D< \infty$, for all $x\in\mc X_{d,k,q}$, we have \[
    \sum_{r=1}^d \check{x}(r) = \sum_{i=1}^d x(i) \leq kq \leq D.
    \]
    Since $\check{x}(0)=\infty>D$, we have $s^* = 0$ and $i^*=0$. Observe that $s(i)\geq 1$ for all $i\in[d]$, using \eqref{eq::mapping_rule_2}, we have $y(x)(i)=0$ for all $i$ such that $s(i)\geq 1$. That is, $y(x)=\mb 0$ for any $x\in\mc X_{d,k,q}$ when $D\geq kq$, implying that $|\mc Y(D)|=1$ in this case. Therefore, in the rest of this section, it suffices to consider $D<kq$.
\end{remark}

We now prove in the following lemma some useful properties related to each $x\in\mc X_{d,k,q}$, its corresponding $\check{x}$, and $y(x)$ under a distortion constraint $D$. The readers may verify these properties using Example~\ref{example::mapping}.


\begin{lemma}
\label{lemma::covering_mapping}
    Under a distortion constraint $D$, for any $x\in\mc X_{d,k,q}\backslash\{\mb 0\}$ and its corresponding $y(x)$ obtained through \eqref{eq::mapping_order}, \eqref{eq::mapping_rule} and \eqref{eq::mapping_rule_2}, if $y(x)\neq \mb 0$, the following hold: \begin{enumerate}[label=\roman*)]
        \item $m(y(x))=y(x)(i^*)>0$
        \item $\max\{x(i):i\in\supp(x)\backslash\supp(y(x))\}\leq y(x)(i^*)$ 
        \item $\{i\in[d]:x(i)=m(y(x)), y(x)(i)=0\}\subseteq\mc A(y(x))$ 
        \item $\|x-y(x)\|_1\leq D$.
    \end{enumerate}
\end{lemma}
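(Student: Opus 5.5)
The plan is to work entirely in the sorted coordinates $\check{x}$ and transfer each statement back through the bijection $s$. First I record the structure of $y(x)$. Since $y(x)\neq\mb 0$, we must have $s^*\geq 1$. Indeed, $s^*=0$ gives $s(i)\geq 1=s(i^*)+1$ for every $i\in[d]$, hence $y(x)=\mb 0$. So $i^*\in[d]$, and by \eqref{eq::mapping_rule_2} we have $\supp(y(x))=\{i: s(i)\leq s^*,\ x(i)>0\}$. Moreover $\check{x}(s^*)>0$. This holds because maximality of $s^*$ gives $\sum_{r\geq s^*+1}\check{x}(r)\leq D$ (the sum being $0$ if $s^*=d$), while $\sum_{r\geq s^*}\check{x}(r)>D$. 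As $\check{x}$ is nonincreasing, $\check{x}(r)\geq\check{x}(s^*)>0$ for all $r\leq s^*$. Hence $y(x)$ agrees with $x$ exactly on the $s^*$ positions $s^{-1}(1),\ldots,s^{-1}(s^*)$, all of which are nonzero, and vanishes elsewhere.

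Items i), ii) and iv) follow directly from this description. For i), the nonzero entries of $y(x)$ are $\check{x}(1)\geq\cdots\geq\check{x}(s^*)$, so $m(y(x))=\check{x}(s^*)=x(i^*)=y(x)(i^*)>0$. For ii), any $i\in\supp(x)\setminus\supp(y(x))$ has $s(i)>s^*$, so $x(i)=\check{x}(s(i))\leq\check{x}(s^*)=y(x)(i^*)$. For iv), $\|x-y(x)\|_1=\sum_{r=s^*+1}^d\check{x}(r)\leq D$, again by maximality of $s^*$.

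Item iii) is the main obstacle, because it needs the tie-breaking rule in \eqref{eq::mapping_order}. Take $i$ with $x(i)=m(y(x))=x(i^*)$ and $y(x)(i)=0$. Since $x(i)>0$ and $y(x)(i)=0$, we get $s(i)>s^*=s(i^*)$. Equal values are ordered by index, so $i>i^*$.

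It remains to show that $i>i_m(y(x))$, and this requires $i_m(y(x))=i^*$. Any $j$ with $y(x)(j)=m(y(x))$ has $s(j)\leq s^*$ and $x(j)=x(i^*)$. The tie-break rule then forces $j\leq i^*$, so $i^*$ is the last index carrying the value $m(y(x))$ in $y(x)$.

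Putting these together, $i>i^*=i_m(y(x))$ and $y(x)(i)=0$, so $i\in\mc A(y(x))$, which proves iii).
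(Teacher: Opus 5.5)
Your proof is correct and follows essentially the same route as the paper: you identify $\supp(y(x))$ with the first $s^*$ positions in sorted order, read off i), ii) and iv) from that, and use the tie-breaking rule in \eqref{eq::mapping_order} to get $i_m(y(x))=i^*$ for iii). The only differences are cosmetic. You argue iii) directly rather than by contradiction, and you state explicitly that $y(x)\neq\mathbf{0}$ forces $s^*\geq 1$ (so $i^*\in[d]$), which the paper leaves implicit.
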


\begin{proof}
First of all, $x\in\mc X_{d,k,q}\backslash\{\mb 0\}$ implies that $\supp(x)\neq \phi$. Throughout this proof, we assume that $y(x)\neq \mb 0$, so that $\supp(y(x))\neq \phi$.

For any $i\in\supp(y(x))$, since $y(x)(i)>0$, by \eqref{eq::mapping_rule_2}, this index $i$ satisfies $s(i)\leq s(i^*)$ and $y(x)(i)=\check{x}(s(i))=x(i)$. In particular, we have \begin{equation*}
    y(x)(i^*)=x(i^*).
\end{equation*} Since $s(i)\leq s(i^*)$, with \eqref{eq::mapping_order}, we obtain that $x(i)\geq x(i^*)=y(x)(i^*)$. Therefore, for all $i\in\supp(y(x))$, we have \[
y(x)(i)=x(i)\geq y(x)(i^*).
\]
Since $y(x)(i^*)=\check{x}(s(i^*))=\check{x}(s^*)>0$, we conclude that the minimum non-zero entry in $y(x)$ has the value \begin{equation}
\label{eq::yx_min_value}
m(y(x)) = y(x)(i^*)>0.
\end{equation}
Hence, we have proved i).

An index $i\geq 1$ satisfying $s(i)\leq s(i^*)$ implies that \[
y(x)(i)\utag{a}{=}x(i)\utag{b}{\geq} x(i^*)=\check{x}(s(i^*))=\check{x}(s^*)>0,
\]
where (a) holds by \eqref{eq::mapping_rule_2} and (b) holds by \eqref{eq::mapping_order}.
That is, $i\in\supp(y(x))$ if $s(i)\leq s(i^*)$. Moreover, the converse is also true since if $s(i)>s(i^*)$, then $y(x)(i)=0$ by \eqref{eq::mapping_rule_2}, implying that $i\notin\supp(y(x))$. Hence, we conclude that, for any index $i\in[d]$, \[
i\in\supp(y(x))\Leftrightarrow s(i)\leq s(i^*),
\] or equivalently, \begin{equation}
\label{eq::equvlt_cond}
i\notin\supp(y(x))\Leftrightarrow s(i)>s(i^*).
\end{equation}
Consider an index $i\in\supp(x)\backslash\supp(y(x))$. Since $i\in\supp(x)\neq \phi$, we have $i\geq 1$. On the other hand, since $i\notin\supp(y(x))$, by \eqref{eq::equvlt_cond}, we have $s(i)>s(i^*)$, which by \eqref{eq::mapping_order} holds if and only if $x(i)<x(i^*)$ or $x(i)=x(i^*)$ and $i>i^*$. Since $y(x)(i^*)=x(i^*)$, we conclude that the maximum $x(i)$ for $i\in\supp(x)\backslash\supp(y(x))$ is no larger than $y(x)(i^*)$, proving ii).

Before proving iii) by contradiction, we first observe that, for any index $i\neq i^*$ with $y(x)(i)=m(y(x))=y(x)(i^*)$, we have $s(i)<s(i^*)$ by \eqref{eq::mapping_rule_2} and $i<i^*$ by \eqref{eq::mapping_order}, implying that $i_m(y(x))=i^*$. Now if the third result is false, then there exists an index $i^\#$ such that $x(i^\#)=m(y(x))$, $y(x)(i^\#)=0$ and $i^\#\notin\mc A(y(x))$, \ie, $i^\#< i_m(y(x))$.
Therefore, $x(i^\#)=m(y(x))=y(x)(i^*)>0$ (cf.~\eqref{eq::yx_min_value}) and $i^\#<i_m(y(x))=i^*$. By \eqref{eq::mapping_order}, we have $s(i^\#)<s(i^*)$ with \[
    y(x)(i^\#)=0\neq x(i^\#), 
    \]
which contradicts \eqref{eq::mapping_rule_2},
implying that $\{i\in[d]:x(i)=m(y(x)), y(i)=0\}\subseteq\mc A(y(x))$, i.e., iii).

Finally, for any $x\in\mc X_{d,k,q}$ and its corresponding $y(x)$ for a given $D$, we have \begin{align*}
\|x-y(x)\|_1 = \sum_{i:s(i)>s(i^*)} x(i)=\sum_{r=s^*+1}^{d}\check{x}(r) \leq D,
\end{align*}
where the first equality holds by \eqref{eq::mapping_rule_2} and the last inequality holds by the definition of $s^*$. This proves iv).
\end{proof}

For this covering compression scheme \[
\pi_{\it cover} := (\mc E_{\it cover}, \mc D_{\it cover}),
\]
we have the deterministic encoder \begin{equation}
\label{eq::cover_encode}
    \mc E_{\it cover}(X_i) = Y(X_i)\in\mc Y(D),
\end{equation}
where the subset $\mc Y(D)$ is the collection of all possible $ y(x)$'s, i.e.,
\begin{equation}
\label{eq::yd_def}
    \mc Y(D) = \big\{ y': y'=  y( x) \text{ for some } x\in\mc X_{d,k,q}\big\}.
\end{equation}
Then each client $i$ transmits $ Y( X_i)$ in $C(\pi_{\it cover})/n=\lceil \log|\mc Y(D)|\rceil$ bits to the server, which can then compute the aggregation 
\begin{equation}
\label{eq::cover_decode}
\hat{ U}_{\it cover}  = \mc D_{\it cover}(\mc E_{\it cover}(X^n)) := \frac{1}{n} \sum_{i=1}^n  \mc E_{\it cover}(X_i).
\end{equation}

We can easily verify that $\pi_{\it cover}\in (D,0)$ by noting that 
\begin{align*}
        \| U-\hat{ U}_{\it cover}\|_1 = \left\|\frac{1}{n}\sum_{i=1}^n \mc E_{\it cover}(X_i) - \frac{1}{n}\sum_{i=1}^n X_i\right\|_1 = \frac{1}{n}\left\|\sum_{i=1}^n (Y( X_i)-X_i)\right\|_1 \leq \max_{ x\in\mc X_{d,k,q}}\|y( x)-x\|_1 \leq D.
    \end{align*}

Now we consider the cardinality $|\mc Y(D)|$ to evaluate the communication cost of this proposed covering scheme.
First of all, if $D\in[0,1)$, since $\check{x}(j)\geq 1>D$ for all $j=1,2,\ldots,\|x\|_0$, $s^*=d$ (cf.~\eqref{eq::mapping_rule}) in this case and $y(x)=x$ for all $x\in\mc X_{d,k,q}$. Hence, $|\mc Y(D)|=|\mc X_{d,k,q}|$ for $D\in[0,1)$. 

We then present a formula of $|\mc Y(D)|$ for $D\geq 1$ obtained through partitioning $\mc Y(D)$.
Define \begin{align*}
    \mc Y_{t,u,a} = \{y\in[0:q]^d: \ |\supp(y)|=t, m(y)=u, |\mc A(y)| = a\},
\end{align*}
\ie, for each $y\in\mc Y_{t,u,a}$, $y$ has the support size equal to $t$, the minimum value of all non-zero entries equal to $u$, and $a$ zero entries after the last entry with value $u$. For the vector \[
y(x)=(0,0,2,0,0,7,0,0,0)
\] in Example~\ref{example::mapping}, since $|\supp(y(x))|=2$, $m(y(x))=2$, and $|\mc A(y(x))|=5$, we see that $y(x)\in\mc Y_{2,2,5}$. 
In the following theorem, we will first prove that the union of $\mc Y_{t,u,a}$'s over certain ranges of $t$, $u$, and $a$ (which depend on $D$) forms a partition of $\mc Y(D)\backslash\{\mb 0\}$. Then the cardinality of $\mc Y(D)$ can be obtained by counting $|\mc Y_{t,u,a}|$ for all possible combinations of $t$, $u$, and $a$.

\begin{theorem}
    \label{thm::yd_size} 
    For any given distortion $D\geq 1$,
    \begin{align}
        |\mc Y(D)| =1+ \sum_{t=1}^{\ceil{k-D/q}}\sum_{\substack{u=\floor{\frac{D}{k-t+1}}+1}}^q \ \sum_{\substack{a=(1+\floor{D-u-(u-1)(k-t)}\})^+}}^{d-t} |\mc Y_{t,u,a}|,\label{eq::yd_size}
    \end{align} 
    where \begin{align*}
    |\mc Y_{t,u,a}| =\sum_{i=d-a-t+1}^{d-a}\binom{d-i}{d-i-a}(q-u)^{d-i-a}\times \binom{i-1}{t-1-(d-i-a)}(q-u+1)^{t-1-(d-i-a)}.        
    \end{align*}
\end{theorem}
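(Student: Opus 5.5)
The plan is to prove \eqref{eq::yd_size} by a set identity followed by a count. First, $\mb 0\in\mc Y(D)$: take $x=\mb 0$, or note that when $s^*=0$ we get $y(x)=\mb 0$. This accounts for the leading $1$.

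The main work is to show that $\mc Y(D)\backslash\{\mb 0\}$ equals the disjoint union of the $\mc Y_{t,u,a}$ over the stated index ranges. The sets are disjoint because $(t,u,a)$ is a function of $y$. Both inclusions are needed.

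\emph{Necessity.} Take $y=y(x)\neq\mb 0$ with $t=|\supp(y)|$, $u=m(y)$, $a=|\mc A(y)|$.
\begin{itemize}
\item By the maximality of $s^*$ in \eqref{eq::mapping_rule}, $\check{x}(s^*)+\|x-y\|_1>D$. By Lemma~\ref{lemma::covering_mapping} i), $\check{x}(s^*)=u$, so $\|x-y\|_1>D-u$.
\item By Lemma~\ref{lemma::covering_mapping} ii), each of the at most $k-t$ dropped entries is $\le u$. Hence $\|x-y\|_1\le (k-t)u$, and together with the previous bullet, $(k-t+1)u>D$. This gives $u\ge\floor{D/(k-t+1)}+1$.
\item Since $u\le q$, the same inequality gives $t<k+1-D/q$, i.e.\ $t\le\ceil{k-D/q}$.
\item For the lower bound on $a$: dropped entries equal to $u$ lie in $\mc A(y)$ by Lemma~\ref{lemma::covering_mapping} iii), and the remaining dropped entries are $\le u-1$. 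So if $b$ dropped entries equal $u$, then $b\le a$ and $\|x-y\|_1\le bu+(k-t-b)(u-1)=b+(u-1)(k-t)$. Combining with $\|x-y\|_1>D-u$ gives $b>D-u-(u-1)(k-t)$, hence $a\ge b\ge 1+\floor{D-u-(u-1)(k-t)}$, which is the stated range. The range $a\le d-t$ is trivial.
\end{itemize}

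\emph{Sufficiency.} Given $y$ in such a $\mc Y_{t,u,a}$, construct $x\supseteq y$ by adding $k-t$ nonzero entries. Put $b=\min(a,k-t)$ entries equal to $u$ on the first positions of $\mc A(y)$, and $k-t-b$ entries equal to $u-1$ elsewhere, where these are nonzero (if $u=1$ they vanish, and then $b=k-t$ must be handled). The tie-breaking in \eqref{eq::mapping_order} then places the new $u$'s after $i_m(y)$. The index constraints guarantee that the tail sum starting at $y$'s last entry exceeds $D$, while the tail sum after it, $b+(u-1)(k-t)$, is at most $D$; so $s^*$ lands exactly there and $y(x)=y$. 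I expect this step to be the main obstacle. The floors must be checked carefully ($D$ real), together with the edge cases $u=1$, $d-t<k-t$, and positions being available outside $\mc A(y)$ for the $(u-1)$-entries. If the maximal $b$ overshoots $D$, I would instead choose $b$ equal to the smallest admissible value.

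\emph{Counting.} To compute $|\mc Y_{t,u,a}|$, condition on $i=i_m(y)$. Positions after $i$ contain $a$ zeros, and the $d-i-a$ nonzero positions among these $d-i$ slots take values in $[u+1:q]$. This gives $\binom{d-i}{d-i-a}(q-u)^{d-i-a}$. The remaining $t-1-(d-i-a)$ nonzero entries lie among positions $1,\ldots,i-1$ with values in $[u:q]$, giving $\binom{i-1}{t-1-(d-i-a)}(q-u+1)^{t-1-(d-i-a)}$. Position $i$ itself has value $u$. The count is nonzero only when $0\le d-i-a\le t-1$, i.e.\ $i\in[d-a-t+1:d-a]$, which gives the stated sum.
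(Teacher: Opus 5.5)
Your necessity argument and your count of $|\mc Y_{t,u,a}|$ (conditioning on $i=i_m(y)$) are essentially the paper's. Both derive $D<u+(k-t)(u-1)+\min\{a,k-t\}$ from Lemma~\ref{lemma::covering_mapping} and read off the three index ranges.

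\textbf{The gap is in sufficiency.} Your construction always adds exactly $k-t$ nonzero entries with values in $\{u-1,u\}$. The tail sum after position $t$ is therefore $(k-t)(u-1)+b\ge (k-t)(u-1)$, however $b$ is chosen. Suppose $u\ge 2$ and $D<(k-t)(u-1)$, a regime compatible with all three index constraints. Then this tail sum exceeds $D$, so $s^*>t$, an added entry survives, and $y(x)\neq y$.

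A concrete instance: take $q\ge 3$, $D=1$, $t=k-1$, $u=3$, and any $a$. This triple is admissible, yet adding a single entry equal to $2$ or $3$ gives a tail sum of at least $2>D$. Your fallback of taking the smallest admissible $b$ does not help, since $b=0$ still leaves $k-t$ entries equal to $u-1$. So your claim that the index constraints force $b+(u-1)(k-t)\le D$ is false. The fallback works exactly when $D\ge (k-t)(u-1)$. When moreover $D\ge u+(k-t)(u-1)$, it coincides with the paper's Case~3: $\lfloor D-u-(k-t)(u-1)\rfloor+1$ entries equal to $u$ placed in $\mc A(y')$, and the rest equal to $u-1$.

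\textbf{The missing idea} is the freedom to add fewer than $k-t$ entries, and entries smaller than $u-1$. The paper does this with two further cases.
\begin{itemize}
\item If $D<u$, it takes $x=y'$, so the tail sum is $0$.
\item If $u\le D<u+(k-t)(u-1)$, it writes $u+\alpha(u-1)+u'\le D<u+\alpha(u-1)+u'+1$ with $0\le\alpha\le k-t-1$ and $0\le u'\le u-2$. It then adds $\alpha$ entries equal to $u-1$ and one entry equal to $u'+1$, giving tail sum $\alpha(u-1)+u'+1\in(D-u,D]$.
\end{itemize}

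Alternatively, you can repair your argument uniformly. Let $M=(k-t)(u-1)+\min\{a,k-t\}$. Every integer in $[0,M]$ is a realizable tail sum using at most $k-t$ added entries: at most $\min\{a,k-t\}$ equal to $u$ placed in $\mc A(y)$, and the rest with values in $[1:u-1]$ placed off $\supp(y)$. Now choose $S=\min\{\lfloor D\rfloor,M\}$. This satisfies $D-u<S\le D$, because $D<u+M$ is exactly what the constraints on $u$ and $a$ encode. Hence $s^*=t$ and $y(x)=y$.
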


\begin{proof}
    Define \begin{align}
    \label{eq::tilde_yd}
        \Tilde{\mc Y}(D) = \bigcup_{t=1}^{\ceil{k-D/q}}\bigcup_{u=\floor{\frac{D}{k-t+1}}+1}^{q} \ \bigcup_{a=(1+\floor{D-u-(u-1)(k-t)})^+}^{d-t}\mc Y_{t,u,a},
    \end{align}
    and we aim to prove that $\mc Y(D)\backslash\{\mb 0\}=\Tilde{\mc Y}(D)$. Evidently, $\mc Y_{t,u,a}\cap\mc Y_{t',u',a'}(D)=\phi$ for $(t,u,a)\neq (t',u',a')$ by definition. Therefore, $\{\mc Y_{t,u,a}\}$ with $(t,u,a)$'s specified in \eqref{eq::tilde_yd} forms a partition of $\Tilde{\mc Y}(D)$.

    We first prove $(\mc Y(D)\backslash\{\mb 0\})\subseteq\Tilde{\mc Y}(D)$. Consider any $y'\in\mc Y(D)$. Then $y'=y(x)$ for some $x\in\mc X_{d,k,q}$. Suppose $|\supp(y(x))|=t$, $m(y(x))=u$, and $|\mc A(y(x))|=a$. Define \[
    \mc T=\supp(x)\backslash\supp(y(x))
    \] 
    with \begin{equation}
    \label{eq::T_size}
        |\mc T|=|\supp(x)|-|\supp(y(x))|\leq k-t
    \end{equation} since $\supp(y(x))\subseteq\supp(x)$. Also, define the index set \begin{align*}
        \mc T_u=\{i:x(i)=u\text{ and }y(x)(i)=0\}.
    \end{align*}
    We have shown in Lemma~\ref{lemma::covering_mapping} that $\mc T_u\subseteq\mc A(y(x))$ and so $|\mc T_u|\leq |\mc A(y(x))|=a$. Since $\mc T_u\subseteq\mc T$, we also have $|\mc T_u|\leq |\mc T|\leq k-t$.
    Therefore, \begin{equation}
        \label{eq::Tu_size}
        |\mc T_u|\leq \min\{a, k-t\}.
    \end{equation} We then have \begin{align*}
        \|x-y(x)\|_1 &=\sum_{i\in\mc T}|x(i)-y(x)(i)|\\
        &=\sum_{i\in\mc T_u}|x(i)-y(x)(i)| + \sum_{i\in\mc T\backslash\mc T_u} |x(i)-y(x)(i)|\\
        &=|\mc T_u|u + \sum_{i\in\mc T\backslash\mc T_u} |x(i)-y(x)(i)|\\
        &\leq |\mc T_u|u + (|\mc T|-|\mc T_u|)(u-1)\\
        &=|\mc T|(u-1) + |\mc T_u|,
    \end{align*}
    where the first inequality above follows from ii) of Lemma~\ref{lemma::covering_mapping}.
    By \eqref{eq::mapping_rule}, $y(x)$ satisfies
    \begin{align}
        D<\sum_{s=s^*}^{d}\check{x}(s) &=\check{x}(s(i^*))+ \sum_{s=s^*+1}^{d}\check{x}(s)\notag \\
        &= x(i^*) + \|x-y(x)\|_1\notag\\
        &=y(x)(i^*) + \|x-y(x)\|_1\notag\\
        &\leq y(x)(i^*) + |\mc T|(u-1) + |\mc T_u|\notag\\
        &= u+|\mc T|(u-1)+|\mc T_u| \label{eq::thm1_temp1}\\
        &\leq u+(k-t)(u-1) + \min\{a, k-t\}\label{eq::range_tua},
    \end{align}
    where \eqref{eq::thm1_temp1} holds by $y(x)(i^*)=m(y(x))=u$ from Lemma~\ref{lemma::covering_mapping}, and the last inequality holds by \eqref{eq::T_size} and \eqref{eq::Tu_size}.

    From \eqref{eq::range_tua} and $\min\{a, k-t\}\leq a$, we obtain \begin{equation}
    \label{eq::range_a_1}
        a>D-u-(k-t)(u-1).
    \end{equation}  
    On the other hand, from \eqref{eq::range_tua} and $\min\{a, k-t\}\leq k-t$, we obtain \begin{equation}
    \label{eq::range_u_1}
        D<u+(u-1)(k-t) + (k-t)=(k-t+1)u.
    \end{equation}
    Also, since $u=m(y(x))\leq \max_i y(x)(i)\leq \max_i x(i)\leq q$, we have \begin{equation}
        \label{eq::range_t_1}
        D<(k-t+1)q
    \end{equation} from \eqref{eq::range_u_1}. 

    By \eqref{eq::range_t_1}, $t$ is the largest integer that is strictly smaller than $k-D/q+1$, i.e., $t$ satisfies \begin{equation}
    \label{eq::range_t}
    1\leq t\leq \left\lceil k-\frac{D}{q}+1\right\rceil-1 = \left\lceil k-\frac{D}{q}\right \rceil.
    \end{equation} 
    Here $k-D/q>0$ since it suffices to consider $D<kq$ as in Remark~\ref{remark::D_le_kq}.

    Now fix $t$ with $t$ satisfying \eqref{eq::range_t}.
    By \eqref{eq::range_u_1}, $u$ should be no less than the smallest integer that is strictly larger than $D/(k-t+1)$, \ie, $u$ satisfies \begin{equation}
    \label{eq::range_u}
    \left\lfloor\frac{D}{k-t+1}\right\rfloor + 1\leq u\leq q.
    \end{equation}
    Here $D<q(k-t+1)$ since $t<k-D/q+1$ by \eqref{eq::range_t}.

    Now fix $t$ and $u$ with $t$ and $u$ satisfying \eqref{eq::range_t} and \eqref{eq::range_u}, respectively.
    By \eqref{eq::range_a_1}, $a$ should be no less than the smallest nonnegative integer that is strictly larger than $D-u-(k-t)(u-1)$, i.e.,  $a \geq (\floor{D-u-(k-t)(u-1)}+1)^+.$
    Also, $a\leq d-t$ since there are $d-t$ zero entries in $y(x)$, implying that $a$ satisfies 
    \begin{equation}
    \label{eq::range_a}
        (\floor{D-u-(k-t)(u-1)}+1)^+ \leq a\leq d-t.
    \end{equation}
    Hence, we have shown that the ranges of $t$, $u$, and $a$ as specified in \eqref{eq::range_t}, \eqref{eq::range_u}, and \eqref{eq::range_a} are necessary for $y'=y(x)\in\mc Y(D)$ with some $x\in\mc X_{d,k,q}$, and as a result, $\mc Y(D)$ is a subset of $\Tilde{\mc Y}(D)$.

    Next, we prove that $\mc Y(D)\backslash\{\mb 0\}\supseteq\Tilde{\mc Y}(D)$ by showing that for every $y'\in\mc Y_{t,u,a}$ with $t$, $u$, and $a$ satisfying the constraints in \eqref{eq::range_t}, \eqref{eq::range_u}, and \eqref{eq::range_a}, $y'$ is in $\mc Y(D)$ defined in \eqref{eq::yd_def}, \ie, there exists an $x\in\mc X_{d,k,q}$ such that $y(x)=y'$.

    Fix a distortion $D$. 
    Recall from \eqref{eq::range_tua} that any tuple $(t,u,a)$ where $t=|\supp(y(x))|$, $u=m(y(x))$, and $a=|\mc A(y(x))|$ for some $x\in\mc X_{d,k,q}$ satisfies \[
    D<u+(k-t)(u-1)+\min\{a,k-t\}.
    \] 
    Under this constraint, we consider three mutually exclusive and exhaustive cases:
    \vspace{1em} 
    
    \noindent \textbf{Case 1} $1\leq D<u$
    
    For this case, we only have to consider $u\geq 2$, because otherwise the case does not exist.
    First of all, $y'\in\mc X_{d,k,q}$ because it follows from \eqref{eq::range_t} that $1\leq t\leq k$.  Let $x=y'$, so that $x\in\mc X_{d,k,q}$. 
    For $i\geq 1$, since $x(i)>0$ if and only if $i\in\supp(x)$, in the permutation $s$ defined in \eqref{eq::mapping_order}, we have \[
    s(i)<s(j) \text{ for any }i\in\supp(x) \text{ and }j\notin\supp(x),
    \]
    implying that $s(i)\leq |\supp(x)|=|\supp(y')|=t$ if and only if $i\in\supp(x)$. In the corresponding $\check{x}$, $\check{x}(t)=m(x)=m(y')=u$ and $\check{x}(r)=0$ for all $r=t+1,\ldots,d$. Since 
    \[
    \sum_{r=t}^d\check{x}(r)=\check{x}(t)=u>D
    \] and \[
    \sum_{r=t+1}^d\check{x}(r)=0,
    \]
    we have $s^*=t$. By \eqref{eq::mapping_rule_2}, $y(x)(i)=x(i)$ for $i$ such that $s(i)\leq s^*=t$ and $y(x)(i)=0$ for $i$ such that $s(i)>t$. Recall that $s(i)\leq t$ if and only if $i\in\supp(x)$. Therefore, $y(x)(i)=x(i)$ for $i\in\supp(x)$ and $y(x)(i)=0$ for $i\notin\supp(x)$, implying that $y(x)=x=y'$. Since $x\in\mc X_{d,k,q}$, this implies that $y'\in\mc Y(D)$ by the definition of $\mc Y(D)$.
    \vspace{1em}

    \noindent \textbf{Case 2} $u\leq D<u+(k-t)(u-1)$
    
    For this case, we only have to consider $t\leq k-1$ and $u\geq 2$, because otherwise the case does not exist.
    For each $t$ and $u$ satisfying the constraint for this case, we always have \[
    u+\alpha(u-1)+u'\leq D<u+\alpha(u-1)+u'+1
    \] for some $\alpha=0,1,\ldots,k-t-1$ and $u'=0,1,\ldots,u-2$.
    Then arbitrarily select an index set $\mc I_{u-1}\subset[d]\backslash\supp(y')$ with $|\mc I_{u-1}|=\alpha$ and an index $i_0\in[d]\backslash(\supp(y')\cup\mc I_{u-1})$. Define $x$ by \begin{equation}
        \label{eq::define_x_case2}
        x(i) = \begin{cases}
            y'(i) &\text{ if } i\in\supp(y')\\
            u-1 &\text{ if } i\in\mc I_{u-1}\\
            u'+1 &\text{ if } i = i_0\\
            0 &\text{ otherwise}
        \end{cases}
    \end{equation}
    for all $i\in[d]$. Note that $u\geq 2$ implies that $u-1\geq 1$ and $u'\leq u-2$ implies that $u'+1\leq u-1$. Since $|\supp(x)|=|\supp(y')|+|\mc I_{u-1}|+1=t+\alpha+1\leq t+(k-t-1)+1 = k$ and $\max_i x(i)= \max_i y'(i)\leq q$, we see that $x\in\mc X_{d,k,q}$.
    
    Next, we prove that $y(x)=y'$. First map this $x$ to $\check{x}$ by the permutation $s$ defined in \eqref{eq::mapping_order}. 
    By our construction of $x$, \[
    x(i)=y'(i)\geq m(y')= u
    \]
    for $i\in\supp(y')$, and \[
    1\leq x(i)\leq u-1
    \]
    for $i\in\mc I_{u-1}\cup\{i_0\}$. 
    Therefore, for any $i\in\supp(y')$ and $j\notin \supp(y')$, we have $x(i)\geq u>x(j)$. Using the permutation $s$ defined by \eqref{eq::mapping_order}, $s(i)<s(j)$ for any $i\in\supp(y')$ and $j\notin \supp(y')$, implying that \begin{equation}
    \label{eq::si_t_iff}
    s(i)\leq |\supp(y')|=t \text{ if and only if } i\in\supp(y').
    \end{equation}
    Moreover, since $i_m(y')\in\supp(y')$, from \eqref{eq::define_x_case2}, we have \begin{equation}
    \label{eq::x_imy_u}
    x(i_m(y'))=y'(i_m(y'))=m(y')=u.
    \end{equation}
    Then for any $i\in\supp(y')$, either \begin{equation}
    \label{eq::temp1_case2}
        x(i_m(y'))=u<y'(i)=x(i),
    \end{equation} or \begin{equation}
    \label{eq::temp2_case2}
        x(i_m(y'))=u=y'(i)=x(i) \text{ with }i\leq i_m(y').
    \end{equation}
    If \eqref{eq::temp1_case2} holds, then $s(i)<s(i_m(y'))$ by \eqref{eq::mapping_order}; if \eqref{eq::temp2_case2} holds, then $s(i)\leq s(i_m(y'))$ by \eqref{eq::mapping_order}. That is, $s(i)\leq s(i_m(y'))$ for any $i\in\supp(y')$.
    Hence, $s(i_m(y'))=|\supp(y')|=t$,
    so that \begin{equation}
    \label{eq::x_check_t_u}
    \check{x}(t)=\check{x}(s(i_m(y')))=x(i_m(y'))=u.
    \end{equation}

    It follows from \eqref{eq::si_t_iff} that 
    \begin{align*}
    \sum_{r=t}^{d} \check{x}(r) &= \check{x}(t) + \sum_{r:r=s(i),i\notin\supp(y')}\check{x}(r)\\
    &=\check{x}(t) + \sum_{i\in\mc I_{u-1}\cup\{i_0\}}\check{x}(s(i))+0\\
    &=\check{x}(t) + \sum_{i\in\mc I_{u-1}}x(i) + x(i_0)\\
    &=u + \alpha(u-1) + u'+1>D
    \end{align*} and
    \begin{align*}
    \sum_{r=t+1}^{d} \check{x}(r) = \sum_{i\in\mc I_{u-1}}x(i) + x(i_0)= \alpha(u-1) + u'+1 < \alpha(u-1) + u'+u\leq D,
    \end{align*}
    where in the last step we have invoked $u\geq 2$.
    Thus $s^*=t$. 
    By \eqref{eq::mapping_rule_2} and \eqref{eq::define_x_case2}, $y(x)(i)=x(i)=y'(i)$ for $i$ such that $s(i)\leq t$, i.e., $i\in\supp(y')$ (cf.~\eqref{eq::si_t_iff}); otherwise, $y(x)(i)=0$. Therefore, $y(x)=y'$ where $x\in\mc X_{d,k,q}$, implying that $y'\in\mc Y(D)$ by the definition of $\mc Y(D)$.
    \vspace{1em}
    
    \noindent \textbf{Case 3} $D\geq u+(k-t)(u-1)$
    
    Note that by \eqref{eq::range_t} and \eqref{eq::range_u}, we have $k-t\geq 0$ and $u\geq 1$.
    Let \begin{equation*}
    \alpha'=\floor{D-u-(k-t)(u-1)}+1.
    \end{equation*}
    For $(t,u,a)$ satisfying the constraint in this case, the lower bound in \eqref{eq::range_a} becomes\[
    (\lfloor D-u-(k-t)(u-1) \rfloor + 1)^+ = \lfloor D-u-(k-t)(u-1)\rfloor + 1
    \]
    since $D-u-(k-t)(u-1)\geq 0$.
    Thus, $\alpha'\leq a$ since $\alpha'$ is the lower bound in \eqref{eq::range_a}. Also, since $D<(k-t+1)u$ by \eqref{eq::range_u_1}, \begin{align*}
    D-u-(k-t)(u-1) =D-(k-t+1)u + (k-t) <k-t,
    \end{align*}
    implying that $\alpha'\leq k-t$.  
    
    Arbitrarily select $\mc I_u\subseteq\mc A(y')$ with $|\mc I_u|=\alpha'$. This is always possible because we have already shown that $\alpha'\leq a=|\mc A(y')|$. Also,  arbitrarily select $\mc I_{u-1}\subset[d]\backslash(\supp(y')\cup\mc I_u)$ with $|\mc I_{u-1}|=k-t-\alpha'$. This is valid since $k-t-\alpha'<d-t-\alpha'=|[d]\backslash(\supp(y')\cup\mc I_u)|$.
    Now define $x$ by 
    \begin{equation}
        \label{eq::define_x_case3}
        x(i) = \begin{cases}
            y'(i) &\text{ if } i\in\supp(y')\\
            u &\text{ if } i\in\mc I_{u}\\
            u-1 &\text{ if } i\in\mc I_{u-1}\\
            0 &\text{ otherwise}
        \end{cases}
    \end{equation}    
    for all $i\in[d]$.
    Note that when $u=1$, $u-1$ becomes $0$ and so $\mc I_{u-1}\subseteq [d]\backslash\supp(x)$.
    Since $|\supp(x)|\leq|\supp(y)|+|\mc I_u|+|\mc I_{u-1}|=t+\alpha'+(k-t-\alpha')=k$, $\max_i x(i)=\max_i y'(i)\leq q$, and $u-1\geq 0$ due to \eqref{eq::range_u}, we see that $x\in\mc X_{d,k,q}$.

    Next, we prove that $y(x)=y'$.
    While the proof of this claim resembles the corresponding proof for Case 2, the details are different.
    First map this $x$ to $\check{x}$ by the permutation $s$ defined in \eqref{eq::mapping_order}.
    By our construction of $x$, \[
    x(i)\geq u \text{ for } i\in\supp(y')\cup\mc I_u
    \]
    and
    \[x(i)\leq u-1, \text{ otherwise.}\] Hence, for $i\in\supp(y')$ and $j\notin\supp(y')\cup\mc I_u$, we have \[
    x(i)\geq u>x(j),
    \]
    implying that $s(i)<s(j)$ by \eqref{eq::mapping_order}.
    For $i\in\supp(y')$ with $x(i)=y(i)>u$ and $j\in\mc I_u$, we have \[
    x(i)>u=x(j),
    \]
    implying $s(i)<s(j)$ as well by \eqref{eq::mapping_order}.    
    Moreover, for $i\in\supp(y')$ with $x(i)=y'(i)=u$ and $j\in\mc I_u$, since $\mc I_u\subseteq\mc A(y')$, we have \[
    x(i)=u=x(j) \text{ with } i\leq i_m(y')<j,
    \]
    implying that $s(i)<s(j)$ still holds by \eqref{eq::mapping_order}.
    Therefore, $s(i)<s(j)$ for any $i\in\supp(y')$ and $j\notin\supp(y')$, and as a result, \begin{equation}
    \label{eq::sit_iff_case3}
    s(i)\leq |\supp(y')| = t \text{ if and only if }i\in\supp(y').
    \end{equation}
    Moreover, since $i_m(y')\in\supp(y')$, from \eqref{eq::define_x_case3}, we have \[
    x(i_m(y'))=y'(i_m(y'))=m(y')=u,
    \]
    i.e., \eqref{eq::x_imy_u}. Then using exactly the same argument following \eqref{eq::x_imy_u}, we can obtain \eqref{eq::x_check_t_u}, i.e., $\check{x}(t)=u$.

    We then have, \begin{align*}
    \sum_{r=t}^d \check{x}(r) &= \check{x}(t) + \sum_{r: r=s(i), i\notin\supp(y')} \check{x}(r)\\
    &= \check{x}(t) + \sum_{i\in\mc I_u\cup\mc I_{u-1}} \check{x}(s(i)) + 0 \\
    &= \check{x}(t) + \sum_{i\in\mc I_u} x(i) + \sum_{i\in\mc I_{u-1}}x(i)\\
    &= u + \alpha' u+ (k-t-\alpha')(u-1)\\
    &=u+(k-t)(u-1) + \alpha'\\
    &=u+(k-t)(u-1) + 1+ \lfloor D-u-(k-t)(u-1)\rfloor \\
    &>u+(k-t)(u-1) + 1+ D - u - (k-t)(u-1) -1\\
    & = D
    \end{align*} and    
    \begin{align*}
    \sum_{r=t+1}^d \check{x}(r) &= \sum_{i\in\mc I_u} x(i) + \sum_{i\in\mc I_{u-1}} x(i)\\
    &= \alpha' u+ (k-t-\alpha')(u-1)\\
    &= (k-t)(u-1) + \alpha'\\
    &= (k-t)(u-1) + 1+ \lfloor D-u-(k-t)(u-1)\rfloor\\
    &\leq (k-t)(u-1) + 1+ D-u-(k-t)(u-1)\\
    &= 1+D-u \leq D,
    \end{align*}
    implying that $s^*=t$. By \eqref{eq::mapping_rule_2} and \eqref{eq::define_x_case3}, $y(x)(i)=x(i)=y'(i)$ for $i$ such that $s(i)\leq t$, i.e., $i\in\supp(y')$ (cf.~\eqref{eq::sit_iff_case3}); otherwise, $y(x)(i)=0$. Therefore, $y(x)=y'$ where $x\in\mc X_{d,k,q}$, implying that $y'\in\mc Y(D)$ by the definition of $\mc Y(D)$.
    \vspace{1em}
    
    Therefore, for any $y'\in\Tilde{\mc Y}(D)$, there exists a $x\in\mc X_{d,k,q}$ such that $y(x)=y'$, implying that $y'\in\mc Y(D)\backslash\{\mb 0\}$ and $\mc Y(D)\backslash\{\mb 0\}\supseteq\Tilde{\mc Y}(D)$.
    Recalling that $\Tilde{\mc Y}(D)\supseteq\mc Y(D)\backslash\{\mb 0\}$, we conclude that $\mc Y(D)\backslash\{\mb 0\}=\Tilde{\mc Y}(D)$.

    It remains to determine the cardinality $|\mc Y_{t,u,a}|$ for every possible combination of $t$, $u$, and $a$ satisfying \eqref{eq::range_t}, \eqref{eq::range_u}, and \eqref{eq::range_a}, respectively. 
    Consider the index $i_m(y')$ for a $y'\in\mc Y_{t,u,a}$. Let $i_m(y')=i$. Then there are $(d-i)$ entries after the last occurrence of the value $u$, among which there are $a$ zero entries and $(d-i-a)$ non-zero entries. Those non-zero entries after $y'(i)$ take values in $\{u+1,u+2,\ldots,q\}$. On the other hand, there are $(t-1-(d-i-a))$ non-zero entries before $y'(i)$, taking values in $\{u,u+1,\ldots,q\}$. From $d-i-a\geq 0$ and $t-1-(d-i-a)\geq 0$, we have \[
    d-a-t+1 \leq i\leq d-a.
    \] 
    Since there are $\binom{d-i}{d-i-a}(q-u)^{d-i-a}$ ways to choose the non-zero entries after $y'(i)$ and $\binom{i-1}{t-1-(d-i-a)}(q-u+1)^{t-1-(d-i-a)}$ ways to choose the non-zero entries before $y'(i)$, we have
    \begin{align}
    \label{eq::y_tma_size}
        |\mc Y_{t,u,a}| =\sum_{i=d-a-t+1}^{d-a}\binom{d-i}{d-i-a}(q-u)^{d-i-a}\times \binom{i-1}{t-1-(d-i-a)}(q-u+1)^{t-1-(d-i-a)}.
    \end{align} 

    In conclusion, the set $\mc Y(D)$ constructed by our proposed covering scheme has cardinality equal to \begin{align*}
        1+ \sum_{(t,u,a)}|\mc Y_{t,u,a}|= 1+\sum_{t=1}^{\ceil{k-D/q}}\sum_{u=\floor{\frac{D}{k-t+1}}+1}^q\sum_{\substack{a=\\(1+\floor{D-u-(u-1)(k-t)}\})^+}}^{d-t} |\mc Y_{t,u,a}|.
    \end{align*}
    Then Theorem~\ref{thm::yd_size} is proved.
\end{proof}

Since the cardinality of $\mc Y(D)$ in \eqref{eq::yd_size} is difficult to evaluate, we prove an upper bound on $\log |\mc Y(D)|$, which is the communication cost per client of this covering scheme.
\begin{corollary}
\label{cor::cover_size}
    For any distortion $D\geq 0$, 
    \begin{equation*}
        \log |\mc Y(D)|\leq \lceil k-D/q \rceil \log(dq)+1.
    \end{equation*}
\end{corollary}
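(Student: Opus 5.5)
The plan is to avoid the exact formula \eqref{eq::yd_size}. Instead I would use the coarser structural fact from the proof of Theorem~\ref{thm::yd_size}: every nonzero $y'\in\mc Y(D)$ has support size $t\le T:=\lceil k-D/q\rceil$ and entries in $[0:q]$. Counting all such vectors, and dominating the count by a geometric series in $dq$, should give the bound.

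\textbf{Case $D\ge kq$.} By Remark~\ref{remark::D_le_kq}, $|\mc Y(D)|=1$, so $\log|\mc Y(D)|=0$. When $D=kq$ we have $T=0$, so the claimed bound is $1$ and holds. For $D>kq$, $T$ can be negative (e.g.\ $T\le -1$ once $D\ge (k+1)q$), and then the right-hand side $T\log(dq)+1$ is negative. So the statement cannot hold literally there. I would state that the corollary is meant for $0\le D\le kq$, in line with the remark that it suffices to consider $D<kq$, or equivalently read $\lceil k-D/q\rceil$ as $(\lceil k-D/q\rceil)^+$, under which the bound is again trivially true.

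\textbf{Case $0\le D<1$.} Here $\mc Y(D)=\mc X_{d,k,q}$, and $T=k$ because $0\le D/q<1$. Every vector in this set has support size at most $T$, so the counting below applies unchanged.

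\textbf{Case $1\le D<kq$.} Theorem~\ref{thm::yd_size} shows $\mc Y(D)\setminus\{\mb 0\}=\Tilde{\mc Y}(D)$, and \eqref{eq::range_t} gives $|\supp(y')|\le T$ for every element. So in all relevant cases,
\[
|\mc Y(D)|\le \sum_{t=0}^{T}\binom{d}{t}q^t\le \sum_{t=0}^{T}(dq)^t .
\]
The main step is to bound this sum by $2(dq)^T$. If $dq\ge 2$,
\[
\sum_{t=0}^{T}(dq)^t=(dq)^T\sum_{j=0}^{T}(dq)^{-j}\le (dq)^T\cdot\frac{1}{1-1/(dq)}\le 2(dq)^T .
\]
Taking logarithms gives $\log|\mc Y(D)|\le T\log(dq)+1$.

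\textbf{Degenerate case $dq=1$.} Here $d=q=1$ and $k\ge1$. The alphabet is $\{0,1\}$, so $\log|\mc Y(D)|\le 1$, while the right-hand side equals $T\cdot 0+1=1$. The bound holds.

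The only genuinely delicate point is the range of $D$: the statement must be read as restricted to $D\le kq$, or with $T$ replaced by $T^+$. Apart from that, the argument just combines the support-size constraint \eqref{eq::range_t} with an elementary geometric-series estimate.
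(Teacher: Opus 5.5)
Your proof is correct, and it takes a different, more elementary route than the paper's. The paper starts from the exact count \eqref{eq::yd_size}. It bounds each $|\mc Y_{t,u,a}|$ by replacing $q-u$ with $q-u+1$, then sums over $a$ with the Chu--Vandermonde identity to get $t\binom{d}{t}(q-u+1)^{t-1}$. Summing over $u\ge\lfloor D/(k-t+1)\rfloor+1$ gives $\sum_{t=1}^{T} t\binom{d}{t}\bigl(q-\lfloor D/(k-t+1)\rfloor\bigr)^{t}$. It finishes with the approximation $\binom{d}{t}\approx d^t/t!$ and a case split on $T=1$, $T=2$, $T=3$ (which needs $d\ge 3$) and $T\ge 4$.

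You use only the support bound \eqref{eq::range_t} and then dominate $\sum_{t=0}^{T}\binom{d}{t}q^t$ by a geometric series. That support bound actually holds for every $D\ge 0$, since its derivation never uses $D\ge 1$, so your separate $0\le D<1$ case is harmless but unnecessary.

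What your route buys:
\begin{itemize}
\item It is fully rigorous, with no ``$\approx$'' step.
\item The zero vector is counted through $t=0$; the paper's chain silently drops the leading $1$ of \eqref{eq::yd_size}.
\item It needs no case analysis on $T$.
\item It makes clear that the corollary uses nothing beyond the sparsity reduction from $k$ to $\lceil k-D/q\rceil$.
\end{itemize}

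What the paper's route buys is an intermediate bound that also encodes the constraint that the smallest nonzero entry of $y$ exceeds $D/(k-t+1)$. That bound can be much sharper for concrete parameters (e.g., large $q$ with $D\ge q$), although the refinement is thrown away in reaching the stated inequality.

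Your caveat on the range of $D$ is justified. The paper implicitly restricts to $D<kq$ through Remark~\ref{remark::D_le_kq}. For $D\ge (k+1)q$ the right-hand side is at most $1-\log(dq)$, which is negative whenever $dq>2$. So the statement needs either that restriction or $\lceil k-D/q\rceil$ replaced by its positive part.
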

\begin{proof}
    Consider \begin{align*}
        |\mc Y_{t,u,a}| &= \sum_{i=d-a-t+1}^{d-a}\binom{d-i}{d-i-a}(q-u)^{d-i-a}\times \binom{i-1}{t-1-(d-i-a)}(q-u+1)^{t-1-(d-i-a)}\\
        &\leq \sum_{i=d-a-t+1}^{d-a}\binom{d-i}{d-i-a}\binom{i-1}{t-1-(d-i-a)}(q-u+1)^{t-1}\\
        &= \sum_{j=0}^{t-1}\binom{a+j}{j}\binom{d-1-(a+j)}{t-1-j}(q-u+1)^{t-1},
    \end{align*}
    where the last equality is obtained by letting $j=d-i-a$. Therefore, 
    \begingroup
    \allowdisplaybreaks
    \begin{align*}
        \sum_{\substack{a=(1+\floor{D-u-(u-1)(k-t)}\})^+}}^{d-t} |\mc Y_{t,u,a}|&\leq \sum_{a=0}^{d-t}\sum_{j=0}^{t-1}\binom{a+j}{j}\binom{d-1-(a+j)}{t-1-j}(q-u+1)^{t-1}\\
        &= (q-u+1)^{t-1}\sum_{j=0}^{t-1}\sum_{a=0}^{d-t}\binom{a+j}{j}\binom{d-1-(a+j)}{t-1-j}\\
        &=(q-u+1)^{t-1}\sum_{j=0}^{t-1}\sum_{m=j}^{d-t+j}\binom{m}{j}\binom{d-1-m}{t-1-j}\\
        &\leq (q-u+1)^{t-1}\sum_{j=0}^{t-1}\sum_{m=0}^{d-1}\binom{m}{j}\binom{d-1-m}{t-1-j}\\
        &=(q-u+1)^{t-1}\sum_{j=0}^{t-1}\binom{d}{t}\\
        &=t\binom{d}{t}(q-u+1)^{t-1},
    \end{align*}
    \endgroup
    where the second last equality holds by the Chu–Vandermonde identity. Then from \eqref{eq::yd_size}, we have \begin{align*}
        |\mc Y(D)|&=\sum_{t=1}^{\ceil{k-D/q}}\sum_{u=\floor{\frac{D}{k-t+1}}+1}^q\sum_{\substack{a=\\(1+\floor{D-u-(u-1)(k-t)}\})^+}}^{d-t} |\mc Y_{t,u,a}|\\
        &\leq \sum_{t=1}^{\ceil{k-D/q}}\sum_{u=\floor{\frac{D}{k-t+1}}+1}^q t\binom{d}{t}(q-u+1)^{t-1}\\
        &\leq \sum_{t=1}^{\ceil{k-D/q}}t\binom{d}{t}\sum_{u=\floor{\frac{D}{k-t+1}}+1}^q \left(q-\left\lfloor{\frac{D}{k-t+1}}\right\rfloor\right)^{t-1}\\
        &= \sum_{t=1}^{\ceil{k-D/q}}t\binom{d}{t} \left(q-\left\lfloor{\frac{D}{k-t+1}}\right\rfloor\right)^{t}.
    \end{align*}
    Here $q-\lfloor D/(k-t+1)\rfloor>0$ since $q(k-t+1)>D$ by \eqref{eq::range_t_1} in the proof of Theorem~\ref{thm::yd_size}.

Under the assumption that $d\gg k$ so that \[
\binom{d}{k}\approx \frac{d^k}{k!},
\] 
using Stirling's approximation, we have \begin{align*}
    \log \,|\mc Y(D)| &\leq \log \sum_{t=1}^{\ceil{k-D/q}}t\binom{d}{t} \left(q-\left\lfloor{\frac{D}{k-t+1}}\right\rfloor\right)^{t}\\
    &\approx \log \sum_{t=1}^{\ceil{k-D/q}}\frac{d^t}{(t-1)!} \left(q-\left\lfloor{\frac{D}{k-t+1}}\right\rfloor\right)^{t}\\
    &\leq \log \left(q^{\ceil{k-D/q}}\cdot \sum_{t=1}^{\ceil{k-D/q}}\frac{d^t}{(t-1)!}\right)\\
    &\leq \log \left(q^{\ceil{k-D/q}}\cdot \ceil{k-D/q}\cdot \frac{d^{\ceil{k-D/q}}}{(\ceil{k-D/q}-1)!}\right).
\end{align*}
For $\lceil k-D/q \rceil=1$ and $\lceil k-D/q \rceil\geq 4$, since $(\lceil k-D/q \rceil - 1)!\geq \lceil k-D/q \rceil$, we have \begin{align*}
    \log \,|\mc Y(D)| &\leq \log \left(q^{\ceil{k-D/q}}\cdot \ceil{k-D/q}\cdot \frac{d^{\ceil{k-D/q}}}{(\ceil{k-D/q}-1)!}\right)\\
    &\leq \log \left(q^{\ceil{k-D/q}}\cdot d^{\ceil{k-D/q}}\right)\\
    &=\ceil{k-D/q}\log (dq). 
\end{align*} 
For $\lceil k-D/q \rceil=3$, we have \begin{align*}
    \sum_{t=1}^{\ceil{k-D/q}}\frac{d^t}{(t-1)!} = d + d^2 + \frac{d^3}{2} \leq d^3=d^{\lceil k-D/q\rceil}
\end{align*}
for $d\geq 3$. For $\lceil k-D/q\rceil=2$, we have \begin{align*}
    \sum_{t=1}^{\ceil{k-D/q}}\frac{d^t}{(t-1)!} = d + d^2 \leq 2d^2 = 2d^{\lceil k-D/q\rceil}
\end{align*}
for $d\geq 1$.
Hence, we conclude that \begin{align*}
    \log \,|\mc Y(D)| &\leq \left(\log q^{\ceil{k-D/q}}\cdot 2d^{\ceil{k-D/q}}\right)=\ceil{k-D/q}\log (dq) + 1.
\end{align*}

\end{proof}

One can observe that in the proposed covering compression scheme, the largest number of non-zero entries of a $y\in\mc Y(D)$ is $\lceil k-D/q\rceil$, which decreases approximately linearly in the distortion $D$. In Section~\ref{sec::combination}, this property will be used to compute the communication-accuracy tradeoff of sketching methods. Moreover, the relation between the achievable communication cost $n\lceil\log |\mc Y(D)|\rceil$, and the distortion $D$ is approximately linear as well. We will later compare this achievable communication cost of the covering scheme with the converse derived in Section~\ref{sec::converse}.

Moreover, when $D=0$, $\mc C(\pi_{\it cover})\leq n\lceil \log |\mc Y(0)|\rceil\leq n(k(\log (dq)+1)+1)$. Recall that the communication cost of the scheme using Reed-Solomon code mentioned in Section~\ref{sec::related_works} is upper bounded by $n(2k+1)(\log(dq)+1)$ (by relaxing ceil functions using $\lceil x\rceil\leq x+1$), which is about twice of that of $\mc C(\pi_{\it cover})$ under the same distortion $D=0$. This difference might come from the linearity of the encoder $Ax$ of Reed-Solomon code. The proposed $\mc E_{\it cover}$ is not linear, which may help us to reduce the achievable communication cost. 

Moreover, one can also adjust the lossless compression scheme using Reed-Solomon code to a lossy compression scheme by reducing the original sparsity $k$ according to the allowed distortion $D$: let $k'=\lceil k-D/q\rceil$; when compressing a vector $x\in\mc X_{d,k,q}$, let any $(\|x\|_0-k')_+$ entries of its non-zero entries to be $0$ and then use the Reed-Solomon code lossless compression scheme to transmit this new sparser vector. Therefore, the distortion in this case is at most \[
(\|x\|_0-k')_+\cdot q\leq \left(k-\left\lceil k-\frac{D}{q}\right\rceil\right)q\leq D.
\] 
Since each input vector of the Reed-Solomon code now has at most $k'$ non-zero entries, the communication cost per client becomes \begin{align*}
(2k'+1)\left\lceil \log q^{\lceil \log_q d\rceil}\right\rceil &\leq \left(2\left\lceil k-\frac{D}{q}\right\rceil+1\right)\cdot \lceil(\log_qd+1)\cdot \log q\rceil\\
&\leq \left(2\left\lceil k-\frac{D}{q}\right\rceil+1\right)\cdot \lceil\log d + \log q\rceil\\
&\leq \left(2\left\lceil k-\frac{D}{q}\right\rceil+1\right)\cdot\left(\log (dq)+1\right).
\end{align*}
We use Fig.~\ref{fig::delta_0_compare} to compare numerically the achievable communication-accuracy tradeoff between our covering scheme, Reed-Solomon code scheme, and the lower bounds when $\delta=0$, which will be derived in Section~\ref{sec::converse}.


When taking privacy and security requirements into consideration, however, this covering scheme is no longer sufficient since the server can always distinguish each client's local model with confidence. To satisfy the privacy requirement, it is necessary to involve randomness in the compression scheme and one popular strategy is sketching, especially the Count Sketch (CS) and the Count-Min Sketch (CMS). In the rest of this section, we will introduce randomness into our proposed covering scheme by combining it with modified CS and CMS. Before analyzing the communication-accuracy tradeoff of the combined compression scheme, the next section will describe the two modified sketching methods and prove their achievable tradeoff. 

\subsection{Two Sketching Schemes}
\label{sec::two_sketchings_schemes}
The first analysis of CS and CMS on the relation between estimation accuracy and number of counters (closely related to communication cost in this paper) can be found in \cite{charikar2002finding} and \cite{cormode2005improved}. As discussed in Section~\ref{sec::related_works}, the recent work \cite{chen2022breaking} applies CS to the binary $1$-sparse aggregation problem and modifies the original CS with an additional threshold step to further reduce the communication cost. 

We will first refine and generalize the results in \cite{chen2022breaking} by proving the achievable communication-accuracy tradeoff of CS with threshold in Lemma~\ref{lemma::count_sketch}. Then, we will apply the same threshold modification to CMS and analyze its communication-accuracy tradeoff in Lemma~\ref{lemma::cm}. 


\vspace{1em}

\noindent\textbf{Count Sketch:}
The communication scheme using the CS method proposed in \cite{charikar2002finding} is as follows.
Suppose all participants in the distributed system share hash functions $h^t:[d]\rightarrow[w]$, where $w<d$, and $\sigma^t:[d]\rightarrow\{-1,1\}$, $t\in[T]$, which are chosen uniformly at random.
Note that $(\sigma^t(j))^2=1$, whether $\sigma^t(j)=-1$ or $1$.
All hash functions $h^t$ and $\sigma^t$ are mutually independent.
In the $t$-th round, the client $i$ first compresses its $d$-dimensional local model $X_i$ to $Y_i^t$, which is a $w$-dimensional vector given by \begin{equation}
\label{eq::cs_encode}
    Y_i^t(r)= \sum_{j=1}^d \mathds{1}(h^t(j)=r)\sigma^t(j)X_i(j)
\end{equation} for every coordinate $r\in[w]$.  

Run this process in parallel for $T$ rounds with hash functions $h^1,\ldots,h^T$ and $\sigma^1,\ldots,\sigma^T$. Then the server  reconstructs each local model $X_i$ by \begin{equation*}
    \hat{X}_i(j) = \text{median}(\hat{X}_i^1(j),\ldots,\hat{X}_i^T(j))
\end{equation*} for every coordinate $j\in[d]$, where \[
\hat{X}_i^t(j) = \sigma^t(j)Y_i^t(h^t(j)).
\]
Then server reconstructs the global model $U$ based on $(\hat{X}_1,\ldots,\hat{X}_n)$. For this compression using CS, we denote it by $\pi_{\it CS}=(\mc E_{\it CS}, \mc D_{\it CS})$, where $\mc E_{\it CS}(X_i) := (Y_i^t, t\in [T])$ with $Y_i^t$ defined by \eqref{eq::cs_encode}. The decoder $\mc D_{\it CS}$ will be specified in the following proof.

\begin{lemma}
\label{lemma::count_sketch}
    With the CS described above, for any $\delta>0$ and $D>0$, as long as \begin{equation}
    \label{eq::dim_err_distortion}
        wT \geq \left\lceil\frac{200k^3q^2}{3D^2}\right\rceil\cdot\left\lceil\log\left(\frac{d}{\delta}\right)\right\rceil,
    \end{equation} we can reconstruct the global model $U$ by an estimate $\hat{U}_{\it CS}=\mc D_{\it CS}(\mc E_{\it CS}(X^n))$ with \[
        \pr(\|\hat{U}_{\it CS}-U\|_1\leq D)\geq 1-\delta.
    \]
\end{lemma}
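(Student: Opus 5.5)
The plan is to decode each client separately and average. The server computes, for every client $i$, a thresholded Count Sketch estimate $\hat X_i$ and sets $\hat U_{\it CS}=\frac1n\sum_i \hat X_i$. Since $\|\hat U_{\it CS}-U\|_1\le \frac1n\sum_i\|\hat X_i-X_i\|_1$, it suffices to show that, with high probability, $\|\hat X_i-X_i\|_1\le D$ for every $i$. I would not sketch the aggregate $nU$, because it may have sparsity of order $nk$ (see the Remark in Section~\ref{sec::formulation}). I would fix the parameters as
\[
w=\left\lceil \tfrac{200k^3q^2}{3D^2}\right\rceil, \qquad T=\left\lceil\log\tfrac d\delta\right\rceil,
\]
and use a threshold $\tau=D/(ck)$ for a constant $c$ to be tuned (I expect $c=2$ or so). The decoder sets $\hat X_i(j)=\text{median}_t\,\hat X_i^t(j)$ if this median has absolute value at least $\tau$, and $\hat X_i(j)=0$ otherwise.

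\emph{Single-row error.} Fix a client $i$, a coordinate $j$ and a row $t$. Then
\[
\hat X_i^t(j)-X_i(j)=\sum_{l\neq j}\mathds 1\left(h^t(l)=h^t(j)\right)\sigma^t(j)\sigma^t(l)X_i(l).
\]
This has mean zero, and by pairwise independence of the signs its variance is at most $\|X_i\|_2^2/w\le kq^2/w$. Chebyshev's inequality then gives
\[
\Pr\left(|\hat X_i^t(j)-X_i(j)|\ge \tau/2\right)\le \frac{4kq^2}{w\tau^2}=\frac{4c^2k^3q^2}{wD^2},
\]
and the choice of $w$ makes this at most a constant $p<1/2$; the numerical factor $200/3$ is what fixes $p$. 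The $T$ rows are independent, so the median deviates by at least $\tau/2$ only if at least half the rows do. A Chernoff (Hoeffding) bound makes this probability $2^{-\Theta(T)}\le \delta/d$ once the constants are set.

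\emph{Error on the good event.} Let $G_i$ be the event that every coordinate's median lies within $\tau/2$ of the true value. On $G_i$, every off-support coordinate has median of absolute value below $\tau$, so the threshold zeroes it and there are no false positives. Each of the at most $k$ support coordinates then contributes error at most $\tau$: either $\tau/2$ from the median, or $|X_i(j)|<3\tau/2$ if it was zeroed. Hence $\|\hat X_i-X_i\|_1\le k\cdot\tfrac32\tau\le D$ for suitable $c$. A union bound over the $d$ coordinates gives $\Pr(G_i^c)\le\delta$.

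\emph{Main obstacle.} The hardest step is making the guarantee hold simultaneously for all $n$ clients without an extra $\log n$ term in the width. The hashes are shared, but the sparse vectors $X_i$ differ, so the events $G_i$ are not identical, and a naive union bound would give $\log(nd/\delta)$. I would try first to replace per-client success by a hash-only event. Call row $t$ \emph{good for $j$} if $h^t(j)$ collides with few heavy coordinates; this cannot be made client-independent in general. The fallback is to average instead. Bound $\mathbb E\|\hat X_i-X_i\|_1$ uniformly in $X_i$, apply Markov to $\frac1n\sum_i\|\hat X_i-X_i\|_1$ (whose expectation obeys the same bound), and push the failure probability into the $\log(d/\delta)$ median factor, where the condition $d\gg nk$ makes the $n$-dependence absorbable. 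I would check carefully which of these two routes the stated constant $200/3$ actually supports.
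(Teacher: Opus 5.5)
Your skeleton matches the paper's: decode each client separately, apply Chebyshev to one row (variance at most $kq^2/w$), take the median over $T$ independent rows with a Chernoff bound, union-bound over the $d$ coordinates, threshold, and average. However, two steps fail as written.

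\textbf{Threshold placement and constants.} You tolerate deviation $\tau/2$ but threshold at $\tau$. A zeroed support coordinate can then carry error up to $3\tau/2$, not $\tau$ as you first say. The requirement $k\cdot\frac32\tau\le D$ forces a deviation tolerance of $D/(3k)$. With $w=\lceil 200k^3q^2/(3D^2)\rceil$, Chebyshev then gives only $p=0.06c^2\ge 0.135$. The median tail $\Pr(\mathrm{Bin}(T,p)\ge T/2)$ behaves like $(4p(1-p))^{T/2}\approx 2^{-0.55T}$, and this exponent is tight. So $d\cdot 2^{-0.55T}\le\delta$ fails for $T=\lceil\log(d/\delta)\rceil$. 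Having $p<1/2$ is not enough: you need $4p(1-p)\le 1/4$, i.e.\ $p\lesssim 0.067$. Hoeffding's $e^{-2T(1/2-p)^2}$ is also too weak, even at $p=0.06$; you need the relative-entropy form $(4p(1-p))^{T/2}$. The paper makes the threshold equal to the deviation tolerance: $\tilde X_i(j)=0$ iff $|\hat X_i(j)|\le D/(2k)$. On the good event, off-support coordinates are zeroed and each support coordinate errs by at most $D/(2k)+D/(2k)=D/k$. Then $p=4k^3q^2/(wD^2)\le 0.06$ gives $4p(1-p)<1/4$, so the per-coordinate failure probability is below $2^{-T}$. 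That is exactly what the constant $200/3$ encodes.

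\textbf{Simultaneity over the $n$ clients.} Your concern is legitimate, and the paper does not resolve it. Its event $\mathcal A$ carries a free client index, and its probability is bounded by the single-client estimate $1-d\,2^{-T}$. Yet it is then used to lower-bound $\Pr(\max_{i\in[n]}\|\tilde X_i-X_i\|_1\le D)$. That step is valid only after replacing $\mathcal A$ by $\bigcap_i\mathcal A_i$, which costs $nd\,2^{-T}$, i.e.\ $T=\lceil\log(nd/\delta)\rceil$. So there is no trick to borrow from the paper here.

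Your Markov fallback cannot close the gap either, because the thresholding loss is essentially deterministic and of order $D$. For example, suppose $D\ge 2k$ and a client's $k$ entries all equal $\lfloor D/(2k)\rfloor$. With high probability most rows are collision-free, so every median equals the true value and every coordinate is zeroed, giving error $k\lfloor D/(2k)\rfloor\ge D/4$. With identical clients the expected average error is then at least about $D/4$, so the Markov bound $\mathbb E[\cdot]/D$ is at least about $1/4$, never $\delta$. The assumption $d\gg nk$ does not remove the extra $\log n$ either; it only bounds it by $\log d$.

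What this line of argument actually proves is one of two things:
\begin{enumerate}[label=\roman*)]
\item the per-client guarantee with the stated constants, or
\item the all-clients guarantee with $\lceil\log(nd/\delta)\rceil$ in place of $\lceil\log(d/\delta)\rceil$.
\end{enumerate}
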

\begin{proof}
    In round $t$, the server reconstructs the $j$-th coordinate of $X_i$, $i\in[n]$, by \begin{align*}
       \hat{X_i}^t(j) &= \sigma^t(j)Y_i^t(h^t(j))\\
       &= \sigma^t(j)\left(\sum_{s=1}^d \mathds{1}(h^t(s)=h^t(j))\sigma^t(s)X_i(s)\right)\\
       &=\mathds{1}(h^t(j)=h^t(j))\sigma^t(j)\sigma^t(j)X_i(j) + \sum_{s\neq j}\mathds{1}(h^t(s)=h^t(j))\sigma^t(j)\sigma^t(s)X_i(s)\\
       &= X_i(j) + \sum_{s\neq j}\mathds{1}(h^t(s)=h^t(j))\sigma^t(j)\sigma^t(s)X_i(s).
    \end{align*} 
    Since $\bb E[\mathds{1}(h^t(s)=h^t(j))]=1/w$ and $\bb E[\sigma^t(s)]=0$ for any $s\neq j$, we have \begin{align*}
        \bb E[\hat{X}_i^t(j)]&= X_i(j) + \sum_{s\neq j}\bb E[\mathds{1}(h^t(s)=h^t(j))\sigma^t(j)\sigma^t(s)X_i(s)] \\
        &= X_i(j) + \sum_{s\neq j}\bb E[\mathds{1}(h^t(s)=h^t(j))]\bb E[\sigma^t(j)]\bb E[\sigma^t(s)]X_i(s)\\
        &= X_i(j) + \sum_{s\neq j}\frac{1}{w}\cdot 0\cdot 0\cdot X_i(s) = X_i(j),
    \end{align*} and \begin{align*}
        \text{Var}(\hat{X}_i^t(j)) &= \bb E[(\hat{X}_i^t(j))^2] - \bb E[\hat{X}_i^t(j)]^2\\
        &= \bb E[(\hat{X}_i^t(j))^2] - X_i(j)^2\\
        &= \bb E\left[\left(X_i(j)+\sum_{s\neq j}\mathds{1}(h^t(s)=h^t(j))\sigma^t(j)\sigma^t(s)X_i(s)\right)^2\right] - X_i(j)^2\\
        &\utag{i}{=} \bb E[X_i(j)^2]+ \bb E\left[\left(\sum_{s\neq j}\mathds{1}(h^t(s)=h^t(j))\sigma^t(j)\sigma^t(s)X_i(s)\right)^2\right] - X_i(j)^2\\
        &= \bb E\left[\left(\sum_{s\neq j}\mathds{1}(h^t(s)=h^t(j))\sigma^t(j)\sigma^t(s)X_i(s)\right)^2\right]\\
        &\utag{ii}{=} \bb E\left[\sum_{s\neq j}\mathds{1}(h^t(s)=h^t(j))^2\sigma^t(j)^2\sigma^t(s)^2X_i(s)^2\right]\\
    &= \bb E\left[\sum_{s\neq j}\mathds{1}(h^t(s)=h^t(j))X_i(s)^2\right]\\
    &=\sum_{s\neq j}\bb E[\mathds{1}(h^t(s)=h^t(j))]X_i(s)^2\\
    &= \frac{\sum_{s\neq j}X_i(s)^2}{w}\leq \frac{\|X_i\|_2^2}{w}\leq \frac{kq^2}{w},
    \end{align*}
    where (i) and (ii) hold since $\bb E[\sigma^t(s)]=0$ for all $s\in[d]$.
    By Chebyshev's inequality, \begin{align*}
    \pr\bigg(|\hat{X}_i^t(j)-X_i(j)|>\frac{D}{2k}\bigg)\leq \frac{\text{Var}(\hat{X}_i^t(j))}{(D/(2k))^2} \leq \frac{kq^2/w}{(D/(2k))^2}= \frac{4q^2k^3}{wD^2}.
\end{align*}
    Let $\hat{X}_i(j)$ be the median of $\hat{X}_i^1(j),\ldots,\hat{X}_i^T(j)$ and \begin{equation}
    \label{eq::def_p}
        p=\frac{4q^2k^3}{wD^2}.
    \end{equation} Note that for fixed $i\in[n]$ and $j\in[d]$, the random variables $\hat{X}_i^t(j), t\in[T]$ are i.i.d.
    Then we have \begin{align}
    \label{eq::chernoff_bd}
        \pr\bigg(|\hat{X}_i(j)-X_i(j)|>\frac{D}{2k}\bigg)\notag &\utag{i}{\leq} \pr\left(\rm{Binom}(T, p)\geq \frac{T}{2}\right)\notag\\
        &\utag{ii}{\leq} \inf_{a>0}\,\left[M(a)\cdot e^{-\frac{T}{2}a}\right]\notag\\
        &= \inf_{a>0} \,\left[(1-p+pe^a)^T\cdot e^{-\frac{T}{2}a}\right]\notag\\
        &= \left(\inf_{a>0}\,\left[(1-p)e^{-\frac{a}{2}}+pe^{\frac{a}{2}}\right]\right)^T\notag\\
        &\utag{iii}{=}\left((1-p)\left(\frac{1-p}{p}\right)^{-\frac{1}{2}}+p\left(\frac{1-p}{p}\right)^{\frac{1}{2}}\right)^T\notag\\
        &= \left(2(p(1-p))^\frac{1}{2}\right)^T =\left(4p(1-p)\right)^\frac{T}{2},
    \end{align}
    where (i) holds since $|\hat{X}_i(j)-X_i(j)|>\frac{D}{2k}$ implies that $X_i^t(j)$ satisfies $|\hat{X}_i^t(j)-X_i(j)|>\frac{D}{2k}$ for at least half of all $t\in[T]$. The inequality (ii) holds by the Chernoff bound, where $M(a)$ is the moment generating function of $\rm{Binom}(T,p)$. The expression in (iii) is obtained by setting the derivative of $(1-p)^{-\frac{a}{2}}+pe^{\frac{a}{2}}$ with respect to $a$ to $0$.
    Letting $w=\lceil\frac{200q^2k^3}{3D^2}\rceil$, so that $p\leq 0.06$ (\cf \eqref{eq::def_p}), we have \[
    p(1-p)<\frac{1}{16},
    \]
    implying that \[
    \pr(|\hat{X}_i(j)-X_i(j)|>D/(2k))< \left(4\times \frac{1}{16}\right)^{-\frac{T}{2}}=2^{-T}
    \]
    by \eqref{eq::chernoff_bd}.
    Taking the union bound over $j\in[d]$, for all $X_i$, we have \begin{equation}
    \label{eq::union_bd}
        \pr\left(\max_{j\in [d]}|\hat{X}_i(j)-X_i(j)|>\frac{D}{2k}\right)\leq d\cdot 2^{-T}.
    \end{equation}
    Consider the event \[
    \mc A := \left\{\max_{j\in[d]}|\hat{X}_i(j)-X_i(j)|\leq \frac{D}{2k}\right\},
\]
i.e., the event that \begin{equation}
\label{eq::event_A}
    |\hat{X}_i(j)-X_i(j)|\leq \frac{D}{2k} \text{ for all } j\in[d].
\end{equation}
We threshold out those coordinates $j$ in $\hat{X}_i$ with $|\hat{X}_i(j)|\leq D/(2k)$ and denote the new estimate as $\Tilde{X}_i(j)$, i.e., let \begin{align}
\label{eq::threshold}
    \begin{cases}
    \Tilde{X}_i(j)=0, &  \text{ if } |\hat{X}_i(j)|\leq \frac{D}{2k}\\
    \Tilde{X}_i(j)=\hat{X}_i(j), &  \text{ if } |\hat{X}_i(j)|>\frac{D}{2k}.
\end{cases}
\end{align}
Given the event $\mc A$, i.e., \eqref{eq::event_A} holds, if $X_i(j)=0$, we have \[
|\hat{X}_i(j)|=|\hat{X}_i(j)-0|=|\hat{X}_i(j)-X_i(j)|\leq \frac{D}{2k},
\]
implying that $\Tilde{X}_i(j)=0$ by \eqref{eq::threshold} and \begin{equation}
\label{eq::case_0}
|\Tilde{X}_i(j)-X_i(j)|=|0-0|=0.
\end{equation}
If $X_i(j)\neq 0$ and $|\hat{X}_i(j)|\leq \frac{D}{2k}$, by \eqref{eq::threshold}, we have $\Tilde{X}_i(j)=0$ and \begin{align}
\label{eq::d/k_case_1}
|\Tilde{X}_i(j)-X_i(j)|&\leq |\Tilde{X}_i(j)-\hat{X}_i(j)| + |\hat{X}_i(j)-X_i(j)|\notag\\
& = |0-\hat{X}_i(j)| + |\hat{X}_i(j)-X_i(j)|\notag\\
& \utag{i}{\leq} |\hat{X}(j)| + \frac{D}{2k} \utag{ii}{\leq} \frac{D}{2k} + \frac{D}{2k} = \frac{D}{k},
\end{align}
where (i) holds by \eqref{eq::event_A} and (ii) holds by the assumption on $X_i(j)$ for this case.
If $X_i(j)\neq 0$ and $|\hat{X}_i(j)|>\frac{D}{2k}$, by \eqref{eq::threshold}, we have $\Tilde{X}_i(j)=\hat{X}_i(j)$, implying that \begin{equation}
\label{eq::d/k_case_2}
|\Tilde{X}_i(j)-X_i(j)| = |\hat{X}_i(j)-X_i(j)|\leq \frac{D}{2k},
\end{equation}
where the inequality holds by \eqref{eq::event_A}.
Thus, the event $\mc A$ implies that\begin{itemize}
    \item if $X_i(j)=0$, then \[
|\Tilde{X}_i(j)-X_i(j)| = 0
\]
by \eqref{eq::case_0};
    \item if $X_i(j)\neq 0$, then \[
|\Tilde{X}_i(j)-X_i(j)| \leq \frac{D}{k}
\]
by \eqref{eq::d/k_case_1} and \eqref{eq::d/k_case_2}.
\end{itemize} 
Therefore, given the event $\mc A$, for any $i\in[n]$, \begin{align*}
\|\Tilde{X}_i-X_i\|_1 = \sum_{j:X_i(j)\neq 0}|\Tilde{X}_i(j)-X_i(j)| + \sum_{j:X_i(j)=0}|\Tilde{X}_i(j)-X_i(j)|\leq k\cdot \frac{D}{k} + 0 = D.
\end{align*} 
Letting $T=\lceil\log \left(\frac{d}{\delta}\right)\rceil$ and \[
    \hat{U}=\frac{1}{n}\sum_{i=1}^n \Tilde{X}_i,
    \] 
we have \begin{align*}
        \pr (\|\hat{U}-U\|_1\leq D) \geq \pr\left(\max_{i\in[n]}\|\Tilde{X}_i-X_i\|_1\leq D\right)\geq \pr(\mc A)\utag{i}{\geq} 1-d\cdot 2^{-T}\geq 1-\delta,
    \end{align*}
where the inequality (i) holds by \eqref{eq::union_bd}.
\end{proof}

Since for each $i\in[n]$, $t\in[T]$, and $r\in[w]$, \[
    Y_i^t(r) = \sum_{j=1}^d \mathds{1}(h^t(j)=r)\sigma^t(j)X_i(j)\leq \sum_{j=1}^d X_i(j) \leq kq,
\]
and \[
    Y_i^t(r) = \sum_{j=1}^d \mathds{1}(h^t(j)=r)\sigma^t(j)X_i(j)\geq \sum_{j=1}^d (-X_i(j)) \geq -kq,
\]
we have $Y_i^t(j)\in\{-kq,\ldots,-1,0,1,\ldots,kq\}$.
Thus, $\mc C_{\it CS}$, the communication cost of this Count Sketch scheme for transmitting $(Y^1_i,\ldots,Y^T_i)_{i\in[n]}$, is \begin{align}
\label{eq::cc_CS}
    n(wT\log(2kq+1)) = n\cdot \left\lceil\frac{200k^3q^2}{3D^2}\right\rceil\cdot\left\lceil\log\left(\frac{d}{\delta}\right)\right\rceil\cdot\left\lceil\log(2kq+1)\right\rceil,
\end{align}
where $n$ is the number of clients, $w$ is the dimension of each vector $Y_i^t$, $t\in[T]$, and $\log (2kq+1)$ is the number of bits required in transmitting the value of each coordinate $Y_i^t(r)$, $r\in[w]$.

We can first observe that the communication cost per client of CS grows in the order $\mc O(
\frac{k^3q^2\log(kq)\log d}{D^2})$. Moreover, as we mentioned at the end of Section~\ref{sec::covering}, since $\mc C_{\it CS}$ grows in the order $\mc O(k^3\log k)$ in the number of non-zero entries $k$, replacing the CS input $x_i\in\mc X_{d,k,q}$ by $y(x_i)\in\mc Y(D')$ for some $D'>0$ provides a tradeoff in the communication cost between the distortions $D$ and $D'$ from the two compression steps, which will be explained in detail in the next section.

Before expounding CMS with threshold, we highlight the differences between CS and CMS here. First, unlike CS, the estimate from CMS is not unbiased; it overestimates $X_i(j)$, the value of each coordinate in the local models (see \eqref{eq::cms_larger_estimate} in the proof of Lemma~\ref{lemma::cm}). Also, it is agreed by previous research that CS is usually more efficient when the distortion is measured in $\ell_2$-loss and CMS is more efficient when $\ell_1$-loss is employed. Since we measure distortion using $\ell_1$-loss throughout this paper, this difference in distortion measurements may be one of the reasons for CMS outperforming CS in terms of the communication-accuracy tradeoff. The performance comparison can be found in Fig.~\ref{fig::results} after we have proved Lemma~\ref{lemma::cm}.

\vspace{1em}
\noindent\textbf{Count-Min Sketch} \cite{cormode2005improved}: 
Applying CMS requires $T$ uniform and mutually independent hash functions $h^t:[d]\rightarrow[w]$, where $w<d$, which are shared by all participants in the system. In the $t$-th round, $t\in[T]$, the client $i$ compresses its local model $X_i$ to $Y^t_i$, which is a $w$-dimensional vector given by \begin{equation}
\label{eq::cms_encode}
Y_i^t(r) = \sum_{j=1}^d \mathds{1}(h^t(j)=r)X_i(j),
\end{equation}
for every coordinate $r\in[w]$.

Similar to CS, we run this process in parallel for $T$ rounds with hash functions $h^1,\ldots,h^T$. After receiving $(Y_i^1,\ldots,Y_i^T)_{i\in[n]}$, the server reconstructs each $X_i$ by \[
\hat{X}_i(j) = \min(\hat{X}_i^1(j),\ldots,\hat{X}_i^T(j))
\] for every coordinate $j\in[d]$, where \[
\hat{X}_i^t(j) = Y^t_i(h^t(j)).
\] 
Then server reconstructs the global model $U$ based on $(\hat{X}_1,\ldots,\hat{X}_n)$. For this compression using CMS, we denote it by $\pi_{\it CMS}=(\mc E_{\it CMS}, \mc D_{\it CMS})$, where $\mc E_{\it CMS}(X_i) := (Y_i^t, t\in [T])$ with $Y_i^t$ defined by \eqref{eq::cms_encode}. The decoder $\mc D_{\it CMS}$ will be specified in the following proof.

\begin{lemma}
\label{lemma::cm}
    With the CMS described above, for any $\delta>0$ and $D>0$, as long as \begin{align*}
        wT\geq \left\lceil\frac{4k^2q}{D}\right\rceil\cdot \left\lceil\log\left(\frac{d}{\delta}\right)\right\rceil,
    \end{align*} we can reconstruct the global model $U$ by an estimate $\hat{U}_{\it CMS}=\mc D_{\it CMS}(\mc E_{\it CMS}(X^n))$ with \begin{equation*}
        \pr(\|U-\hat{U}\|_1\leq D)\geq 1-\delta. 
    \end{equation*}
\end{lemma}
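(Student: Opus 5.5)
I would mirror the proof of Lemma~\ref{lemma::count_sketch}, with two changes. Chebyshev is replaced by Markov's inequality, which uses the one-sided, nonnegative error of CMS. The median amplification is replaced by the independence of the $T$ rows under the minimum. Fix a client $i$, a coordinate $j$ and a row $t$. Since $X_i\geq 0$ entrywise, we have
\begin{equation*}
\hat{X}_i^t(j)=X_i(j)+\sum_{s\neq j}\mathds{1}(h^t(s)=h^t(j))X_i(s)\geq X_i(j),
\end{equation*}
so every row overestimates, and hence so does the minimum $\hat{X}_i(j)$. The excess $E^t_j:=\hat{X}_i^t(j)-X_i(j)$ is nonnegative. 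Because $\pr(h^t(s)=h^t(j))=1/w$ for $s\neq j$, its mean satisfies $\bb E[E^t_j]\leq \|X_i\|_1/w\leq kq/w$.

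By Markov's inequality,
\begin{equation*}
\pr\left(E^t_j>\frac{D}{2k}\right)\leq \frac{2k^2q}{wD}.
\end{equation*}
With $w=\lceil 4k^2q/D\rceil$ this is at most $1/2$. The rows use independent hash functions, and the minimum exceeds $X_i(j)+D/(2k)$ only if every row does, so
\begin{equation*}
\pr\left(\hat{X}_i(j)-X_i(j)>\frac{D}{2k}\right)\leq 2^{-T}.
\end{equation*}
A union bound over $j\in[d]$ gives
\begin{equation*}
\pr(\mc A^c)\leq d\,2^{-T}\leq\delta \quad\text{for } T=\lceil\log(d/\delta)\rceil,
\end{equation*}
where $\mc A$ is the event that $0\leq \hat{X}_i(j)-X_i(j)\leq D/(2k)$ for all $j$.

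Next comes the thresholding step. Define $\tilde{X}_i(j)=0$ if $\hat{X}_i(j)\leq D/(2k)$ and $\tilde{X}_i(j)=\hat{X}_i(j)$ otherwise. On $\mc A$ there are three cases.
\begin{itemize}
\item If $X_i(j)=0$, then $\hat{X}_i(j)\leq D/(2k)$, so $\tilde{X}_i(j)=0$ and the error is $0$.
\item If $X_i(j)\neq0$ and the coordinate is kept, the error is at most $D/(2k)$.
\item If $X_i(j)\neq0$ and the coordinate is zeroed, the error is $X_i(j)\leq\hat{X}_i(j)\leq D/(2k)$.
\end{itemize}
This last case is better than in CS thanks to overestimation. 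Since $\|X_i\|_0\leq k$, we get $\|\tilde{X}_i-X_i\|_1\leq D/2\leq D$.

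The decoder outputs $\hat{U}=\frac1n\sum_i\tilde{X}_i$, and the triangle inequality gives $\|U-\hat{U}\|_1\leq\max_i\|\tilde{X}_i-X_i\|_1\leq D$. The slack would even allow $w=\lceil 2k^2q/D\rceil$ with threshold $D/k$, but I would keep the stated constant.

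The main obstacle is the probability statement over all clients. The union bound above is for a single client $i$, and taking the maximum over $n$ clients naively costs a factor of $n$. I would first handle this the way the proof of Lemma~\ref{lemma::count_sketch} does, bounding by the per-client event $\mc A$. If that is deemed insufficient, I would instead add the sketches linearly across clients and apply the same Markov argument to the aggregate. In that version $\hat{U}$ comes from the summed sketch, the excess mean is at most $\|U\|_1/w\leq kq/w$, and the support of $U$ can have size up to $nk$. This would change the threshold accounting, so I would stay with the per-client version consistent with the CS lemma.
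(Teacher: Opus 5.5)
Your argument follows the paper's proof essentially line by line. Both use the overestimation identity, the bound $\bb E[\hat X_i^t(j)-X_i(j)]\le kq/w$, Markov at level $D/(2k)$, and independence of the $T$ rows under the minimum. Both then take a union bound over $j\in[d]$ with $w=\lceil 4k^2q/D\rceil$ and $T=\lceil\log(d/\delta)\rceil$, threshold at $D/(2k)$, and average over clients. You also use one-sidedness in the zeroed-support case, getting error at most $D/(2k)$ there instead of the $D/k$ the paper inherits from the CS argument. This is a correct minor sharpening and gives per-client error $D/2$.

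The point you single out as the main obstacle is a genuine gap. Deferring to ``the way the CS lemma does it'' does not close it, because the paper's proof contains exactly the same unjustified step: $\pr(\max_{i\in[n]}\|\tilde X_i-X_i\|_1\le D)\ge\pr(\mc A)$, where $\mc A$ is an event about a single client. All clients share $h^1,\ldots,h^T$, so the per-client good events are dependent. Without further structure you must union-bound over all $(i,j)\in[n]\times[d]$, which gives $nd\,2^{-T}$, not $d\,2^{-T}$.

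One repair is to take $T=\lceil\log(nd/\delta)\rceil$, but that changes the stated condition. To keep the statement as written, use slack your argument discards:
\begin{itemize}
\item If $D\ge kq$, the trivial decoder $\hat U=\mb 0$ already works, because $\|U\|_1\le kq$.
\item If $D<kq$, row $t$ can fail for the pair $(i,j)$ only if $h^t(j)=h^t(s)$ for some $s\in\supp(X_i)\setminus\{j\}$. So its failure probability is at most $k/w\le D/(4kq)<1/4$, rather than Markov's $1/2$.
\item Hence $\pr(\hat X_i(j)-X_i(j)>D/(2k))<4^{-T}\le(\delta/d)^2$.
\item The union bound over all $nd$ pairs then gives at most $n\delta^2/d\le\delta$ whenever $n\le d$, which the standing assumption $d\gg nk$ guarantees.
\end{itemize}

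The summed-sketch variant you mention is not the way out. It makes the target $nk$-sparse with $\ell_1$-mass up to $nkq$ in unnormalized scale, which forces $w$ to grow linearly in $n$.
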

\begin{proof}
    In each round $t$, the server reconstructs the $j$-th coordinate of $X_i$, $i\in[n]$, by \begin{align}
    \label{eq::cms_larger_estimate}
        \hat{X}_i^t(j) &=Y_i^t(h^t(j))\notag\\
        &=\sum_{s=1}^d \mathds{1}(h^t(s)=h^t(j))X_i(s)\notag\\
        &=\mathds{1}(h^t(j)=h^t(j))X_i(j) + \sum_{s\neq j}\mathds{1}(h^t(s)=h^t(j))X_i(s)\notag\\
        &=X_i(j)+\sum_{s\neq j}\mathds{1}(h^t(s)=h^t(j))X_i(s)\notag\\
        &\geq X_i(j).
    \end{align} Since $\bb E[\mathds{1}(h^t(s)=h^t(j))]=1/w$ for any $s\neq j$ and $\hat{X}_i^t(j)\geq X_i(j)$ for all $t\in [T]$, we have \begin{align*}
    \bb E[|\hat{X}_i^t(j)-X_i(j)|]&=\bb E[\hat{X}_i^t(j)-X_i(j)]\\
    &=\bb E\left[X_i(j)+\sum_{s\neq j}\mathds{1}(h^t(s)=h^t(j))X_i(s)-X_i(j)\right]\\
    &=\bb E\left[\sum_{s\neq j}\mathds{1}(h^t(s)=h^t(j))X_i(s)\right]\\
    &=\sum_{s\neq j} \bb E[\mathds{1}(h^t(s)=h^t(j))]X_i(s)\\
    &= \frac{1}{w}\sum_{s\neq j}X_i(s)\leq \frac{\|X_i\|_1}{w} \leq \frac{kq}{w}.
\end{align*}
    By Markov's inequality, \begin{align*}
    \pr\left(\hat{X}_i^t(j)-X_i(j)>\frac{D}{2k}\right)&\leq \frac{\bb E[\hat{X}_i^t(j)-X_i(j)]}{D/(2k)}\leq \frac{kq/w}{D/(2k)}= \frac{2k^2q}{wD}.
\end{align*}
Note that for fixed $i\in[n]$ and $j\in[d]$, $(X_i^t(j)$, $t\in[T])$ are i.i.d.
Then we have \begin{align*}
    \pr(|\hat{X}_i(j)-X_i(j)|>D/(2k)) &= \pr(\hat{X}_i(j)-X_i(j)>D/(2k))\\
    &= \pr\left(\min_{t\in[T]} \hat{X}_i^t(j)-X_i(j)>D/(2k)\right)\\
    &= \pr(\forall t\in[T],\, \hat{X}_i^t(j)-X_i(j)>D/(2k))\\
    &= \prod_{t=1}^T \pr(|\hat{X}_i^t(j)-X_i(j)|>D/(2k)) \leq \left(\frac{2k^2q}{wD}\right)^T.
\end{align*}
Taking the union bound over $j\in[d]$ and letting $w=\lceil \frac{4k^2q}{D}\rceil$, \begin{align}
\label{eq::event_a_prob}
     \pr\left(\max_{j\in [d]}|\hat{X}_i(j)-X_i(j)|>\frac{D}{2k}\right)\leq d\cdot \left(\frac{2k^2q}{wD}\right)^T \leq d\cdot 2^{-T}.
\end{align}
Similar to CS, we consider the event \begin{equation*}
    \mc A := \left\{\max_{j\in[d]}|\hat{X}_i(j)-X_i(j)|\leq \frac{D}{2k}\right\}
\end{equation*} and threshold out every coordinate $j$ with $|\hat{X}_i(j)|\leq D/(2k)$. Denote the new estimation as $\Tilde{X}_i(j)$. Using exactly the same argument following \eqref{eq::threshold}, we can obtain that for all $i\in[n]$, given the event $\mc A$, for any $i\in[n]$,\[
\|\Tilde{X}_i-X_i\|_1\leq D.
\]
Letting $T=\lceil\log\left(\frac{d}{\delta}\right)\rceil$ and \[
\hat{U} = \frac{1}{n}\sum_{i=1}^n \Tilde{X}_i,
\]
we have
\begin{align*}
\pr(\|\hat{U}-U\|_1\leq D) \geq \pr\left(\max_{i\in[n]}\|\Tilde{X}_i-X_i\|_1\leq D\right) \geq \pr(\mc A)\utag{i}{\geq}1-d\cdot 2^{-T}\geq 1-\delta,
\end{align*}
where (i) holds by \eqref{eq::event_a_prob}.
\end{proof}

For each $i\in[n]$, $t\in[T]$, and $r\in[w]$, \[
0\leq Y_i^t(w) = \sum_{j=1}^d\mathds{1}(h^t(j)=r)X_i(j)\leq \sum_{j=1}^d X_i(j)\leq kq,
\]
implying that \[
Y_i^t(r)\in\{0,1,\ldots,kq\}.
\]
To transmit all $(Y_i^1,\ldots,Y_i^T)_{i\in[n]}$ to the server, $\mc C_{\it CMS}$, the communication cost of the scheme using CMS, is \begin{align}
\label{eq::cc_CM}
    n(wT\log(kq+1)) = n\cdot \left\lceil\frac{4k^2q}{D}\right\rceil\cdot \left\lceil\log\left(\frac{d}{\delta}\right)\right\rceil\cdot \lceil\log (kq+1)\rceil,
\end{align}
where $n$ is the number of clients, $w$ is the dimension of each vector $Y_i^t$, $t\in[T]$, and $\log (kq+1)$ is the number of bits required for transmitting the value of each coordinate $Y_i^t(r)$, $r\in[w]$.

The communication cost per client of CMS grows in the order $\mc O(
\frac{k^2q\log(kq)\log d}{D})$. Since $\mc C_{\it CMS}$ grows in the order $\mc O(k^2\log k)$ in the number of non-zero entries $k$, replacing the CMS input $x_i\in\mc X_{d,k,q}$ by $y(x_i)\in\mc Y(D')$ for some $D'>0$ can provide a tradeoff in the total communication cost between the distortions $D$ and $D'$ from two compression steps. This observation motivates our calculation in the next section.

\subsection{Combining Covering and Sketching Compressions}
\label{sec::combination}
One observation for the covering compression is that the output from its encoder $y\in\mc Y(D)\subseteq\mc X_{d,k,q}$ for any $D$ is also a sparse vector, which can be further compressed through sketching methods. That is, we can combine the covering and sketching methods to develop new compression schemes \begin{equation}
\label{eq::combine_cs}
    \pi_{\it cCS} = (\mc E_{\it CS}\circ \mc E_{\it cover}, \mc D_{\it CS}),
\end{equation} and \begin{equation}
\label{eq::combine_cms}
    \pi_{\it cCMS} = (\mc E_{\it CMS}\circ \mc E_{\it cover}, \mc D_{\it CMS}).
\end{equation}

Suppose $\pi_{\it cover}\in\Pi(D_1,0)$ and $\pi_{\it CS}, \pi_{\it CMS}\in\Pi(D_2,\delta)$. 
One can easily verify that both $\pi_{\it cCS}$ and $\pi_{\it cCMS}$ defined by \eqref{eq::combine_cs} and \eqref{eq::combine_cms} are in $\Pi(D_1+D_2, \delta)$ as follows. Define the successful events \[
\mc S_{\it cover}=\left\{ \left\|\mc D_{\it cover}(\mc E_{\it cover}(X^n))-\frac{1}{n}\sum_{i=1}^n X_i\right\|_1\leq D_1\right\}
\] and
\begin{align*}
\mc S_{\it CS} = \left\{\left\|\mc D_{\it CS}(\mc E_{\it CS}(X^n))-\frac{1}{n}\sum_{i=1}^n X_i\right\|_1\leq D_2\right\}.
\end{align*}
If both $\mc S_{\it cover}$ and $\mc S_{\it CS}$ occur, then the reproduced \[
\hat{U}_{\it cCS} = \mc D_{\it CS}(\mc E_{\it CS}(\mc E_{\it cover}(X_i)),i\in[n])
\] satisfies \begin{align*}
    \left\|\hat{U}_{\it cCS}-U \right\|_1 &\utag{i}{\leq} \left\|\mc D_{\it CS}(\mc E_{\it CS} (Y(X_i)),i\in[n])-\frac{1}{n}\sum_{i=1}^n Y(X_i)\right\|_1  + \left\|\frac{1}{n}\sum_{i=1}^n Y(X_i)-\frac{1}{n}\sum_{i=1}^n X_i\right\|_1\\
    &\utag{ii}{=} \left\|\mc D_{\it CS}(\mc E_{\it CS} (Y(X_i)),i\in[n])-\frac{1}{n}\sum_{i=1}^n Y(X_i)\right\|_1  + \left\|\mc D_{\it cover}(\mc E_{\it cover}(X^n))-\frac{1}{n}\sum_{i=1}^n X_i\right\|_1\\
    &\utag{iii}{\leq} D_1+D_2,
\end{align*}
where (i) holds by triangle inequality and the definition of $\mc E_{\it cover}$ in \eqref{eq::cover_encode}, (ii) holds by the definition of $\mc D_{\it cover}$ in \eqref{eq::cover_decode}, and (iii) holds by $\mc S_{\it cover}$ and $\mc S_{\it CS}$.
Therefore, \begin{align*}
    \pr(\|\hat{U}_{\it cCS}-U\|_1\leq D_1+D_2) \geq \pr (\mc S_{\it CS}\cap \mc S_{\it cover})= 1-\pr(\mc S_{\it CS}^c\cup\mc S_{\it cover}^c) \geq 1-\delta,
\end{align*}
and hence $\pi_{\it cCS}\in\Pi(D_1+D_2, \delta)$. Similarly, we see that $\pi_{\it cCMS}\in\Pi(D_1+D_2, \delta)$.

However, the communication cost of the combined scheme is not simply the sum $C(\pi_{\it cover})+C(\pi_{\it CS})$ or $C(\pi_{\it cover})+C(\pi_{\it CMS})$. We see from \eqref{eq::range_t} in the proof of Theorem~\ref{thm::yd_size} that $\mc Y(D_1)$ contains sparse vectors with at most $\lceil k-D_1/q\rceil$ non-zero entries.
Since $\mc E_{\it cover}(X_i)=Y(X_i)$ for all $i\in[n]$ are in the subset $\mc Y(D_1)$, the communication costs of the sketching methods can be computed by replacing the $k$-sparse condition by the $\ceil{k-D_1/q}$-sparse condition in Lemma~\ref{lemma::count_sketch} and Lemma~\ref{lemma::cm}. We can then upper bound the communication-accuracy tradeoff $C(D_1+D_2,\delta)$ by minimizing $\mc C_{\it CS}$ and $\mc C_{\it CMS}$ over $D_1$ and $D_2$.

\begin{theorem}
\label{thm::achieved_indvdl}
    Fix $\delta\in (0,1)$. For any $D\in[mq,(m+1)q)$, $m=0,1,\ldots,k-1$, the scheme $\pi_{\it cCS}=(\mc E_{\it CS}\circ \mc E_{\it cover}, \mc D_{\it CS})\in\Pi(D,\delta)$ achieves \begin{itemize}
        \item \begin{align*}
            \mc C(\pi_{\it cCS})\leq n\cdot \left\lceil\frac{200k^3}{3m^3q}\right\rceil\cdot \left\lceil\log\left(\frac{d}{\delta}\right)\right\rceil\cdot \lceil\log (2kq+1)\rceil
        \end{align*}
        if $m < \lceil 2k/3\rceil$;
        \item \begin{align*}
            \mc C(\pi_{\it cCS})\leq  450 n\cdot \left(k-m\right)\cdot \left\lceil\log\left(\frac{d}{\delta}\right)\right\rceil\cdot \lceil\log (6(k-m) q+1)\rceil
        \end{align*}
        if $m\geq \lceil 2k/3\rceil$.
    \end{itemize}   
    Also, the scheme $\pi_{\it cCMS}=(\mc E_{\it CMS}\circ \mc E_{\it cover}, \mc D_{\it CMS})\in\Pi(D,\delta)$ achieves \begin{itemize}
        \item \begin{align*}
            \mc C(\pi_{\it cCMS})\leq n\cdot \left\lceil\frac{4k^2}{m}\right\rceil\cdot \left\lceil\log\left(\frac{d}{\delta}\right)\right\rceil\cdot \lceil\log (kq+1)\rceil
        \end{align*}
        if $m<\lceil k/2\rceil$;
        \item \begin{align*}
            \mc C(\pi_{\it cCMS})\leq  16 n\cdot (k-m)\cdot \left\lceil\log\left(\frac{d}{\delta}\right)\right\rceil\cdot \lceil\log (2(k-m)q+1)\rceil
        \end{align*}
        if $m\geq \lceil k/2\rceil$.
    \end{itemize}
\end{theorem}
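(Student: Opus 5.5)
We split the total distortion as $D=D_1+D_2$, run the covering stage with $D_1$ and the sketching stage with $D_2$. The combination argument preceding the theorem shows that the resulting scheme lies in $\Pi(D_1+D_2,\delta)$. Every output of $\mc E_{\it cover}$ lies in $\mc Y(D_1)$, and by \eqref{eq::range_t} such a vector has at most $k':=\lceil k-D_1/q\rceil$ nonzero entries, still with values in $[0:q]$. We may therefore apply Lemma~\ref{lemma::count_sketch} and Lemma~\ref{lemma::cm} with $k$ replaced by $k'$. The communication costs become
\begin{align*}
n\left\lceil\tfrac{200k'^3q^2}{3D_2^2}\right\rceil\left\lceil\log\tfrac{d}{\delta}\right\rceil\lceil\log(2k'q+1)\rceil
\quad\text{and}\quad
n\left\lceil\tfrac{4k'^2q}{D_2}\right\rceil\left\lceil\log\tfrac{d}{\delta}\right\rceil\lceil\log(k'q+1)\rceil ,
\end{align*}
as in \eqref{eq::cc_CS} and \eqref{eq::cc_CM}. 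Since $D\ge mq$, it suffices to allocate the budget $mq$ and discard the slack, because a $(mq,\delta)$-accurate scheme is also $(D,\delta)$-accurate. The proof then reduces to choosing $D_1=jq$ and $D_2=(m-j)q$ for a suitable integer $j\in[0:m-1]$, which gives $k'=k-j$. The cost is then proportional, up to the logarithmic factors, to $(k-j)^3/(m-j)^2$ for CS and $(k-j)^2/(m-j)$ for CMS.

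\textbf{Small $m$.} Take $j=0$, i.e., no covering. Then $k'=k$ and $D_2=mq$, which directly gives $\lceil 200k^3/(3m^2)\rceil$ for CS. The stated bound has $m^3q$ in the denominator, so presumably the paper's intended bound or the reduction differs slightly; I would reconcile this by checking which allocation yields exactly $k^3/(m^3q)$. Possibly the intended factor is $q^2/D_2^2\le 1/m^2$, written loosely. For CMS, $j=0$ gives $\lceil 4k^2/m\rceil$, matching the statement exactly. The thresholds $\lceil 2k/3\rceil$ and $\lceil k/2\rceil$ come from minimizing over real $j$. For CMS, $g(j)=(k-j)^2/(m-j)$ has derivative proportional to $(k-j)(2j-2m+k-j)$, which vanishes at $j=2m-k$. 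This is positive only when $m>k/2$, so $j=0$ is optimal for $m<k/2$. For CS, the derivative of $(k-j)^3/(m-j)^2$ vanishes at $j=3m-2k$, which is positive only when $m>2k/3$.

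\textbf{Large $m$.} Choose $j=2m-k$ for CMS, or its integer rounding. Then $k'=2(k-m)$ and $D_2=(k-m)q$, so the cost factor is $4\cdot 4(k-m)^2q/((k-m)q)=16(k-m)$, with value alphabet size $2(k-m)q+1$; this matches the stated bound. For CS, take $j=3m-2k$, which gives $k'=3(k-m)$ and $D_2=2(k-m)q$. The factor becomes $\frac{200\cdot 27(k-m)^3q^2}{3\cdot 4(k-m)^2q^2}=450(k-m)$, with alphabet size $6(k-m)q+1$; again this matches.

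\textbf{Main obstacle.} The hard part is the integrality and ceiling bookkeeping. With $m\ge\lceil 2k/3\rceil$ we need $j=3m-2k$ to lie in $[0,m-1]$, i.e., $m\le k-1$, which holds. It is also nonnegative by the threshold, and we need the ceiling $\lceil\cdot\rceil$ applied to $450(k-m)$ to cause no issue, which holds since that quantity is already an integer. I would also double-check that $k'=\lceil k-jq/q\rceil=k-j$ exactly, and that the ceiling in Lemma~\ref{lemma::count_sketch} is absorbed. The CS small-$m$ exponent discrepancy noted above is the one place where I expect the statement as written may need an adjusted constant or exponent.
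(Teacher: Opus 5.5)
Your proposal follows the paper's proof: reduce to $D=mq$, split $D_1=m_1q$ and $D_2=m_2q$ (your $j=m_1$), use the sparsity bound $\lceil k-D_1/q\rceil=k-m_1$ from \eqref{eq::range_t} to replace $k$ by $k-m_1$ in \eqref{eq::cc_CS} and \eqref{eq::cc_CM}, and take $m_1=0$ for small $m$ and $m_1=3m-2k$ (CS) or $m_1=2m-k$ (CMS) for large $m$. Your values $450(k-m)$, $16(k-m)$, $\lceil 4k^2/m\rceil$ and the value-alphabet sizes all agree with the paper.

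The discrepancy you flag in the small-$m$ Count Sketch bullet is real, and it is not a defect in your argument. The paper's proof makes the same substitution $(m_1,m_2)=(0,m)$ into \eqref{eq::comm_cost_comb_cs}. That substitution yields $\lceil 200k^3/(3m^2)\rceil$, yet the paper then writes $m^3q$, so the statement contains a typo.

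You can settle this instead of looking for another allocation. For $m<2k/3$, the derivative of $(k-j)^3/(m-j)^2$ has the sign of $j-(3m-2k)$, which is positive on $[0,m)$, so $j=0$ is optimal among all splits. Moreover $k^3/m^2>k^3/(m^3q)$ whenever $mq>1$. Hence no split in this covering-plus-CS framework, with Lemma~\ref{lemma::count_sketch} as the sketch guarantee, reaches the stated bound. The bullet should read $\lceil 200k^3/(3m^2)\rceil$.

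Separately, for $m=0$ the small-$m$ bounds are vacuous, because Lemma~\ref{lemma::count_sketch} and Lemma~\ref{lemma::cm} need $D_2>0$ and your range $j\in[0:m-1]$ is then empty. Both you and the paper leave this implicit.
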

\begin{proof}
Fix an error probability $\delta\in(0,1)$ and a total distortion $D=D_1+D_2$, where $D_1,D_2\geq 0$. For $mq\leq D<(m+1)q$, $m=0,1,\ldots,k-1$, we have $\pi\in\Pi(mq,\delta)\subseteq
\Pi(D,\delta)$. When $D\geq kq$, we have 
\[\mc C(\pi_{\it cover})=0\geq C(D,0)\geq C(D,\delta)\geq 0\] 
for any $\delta\in(0,1)$. In the rest of this proof, we only consider $D=mq$ with $m=0,1,\ldots,k-1$, and divide it into $D_1=m_1q$ and $D_2=m_2q$ with $m_1+m_2=m$, $m_1,m_2\in\{0,1,\ldots,k-1\}$.

Consider the communication cost of the scheme $\pi_{\it cCS}=(\mc E_{\it CS}\circ \mc E_{\it cover}, \mc D_{\it CS})$ with $\pi_{\it CS}\in\Pi(m_2q,\delta)$ and $\pi_{\it cover}\in\Pi(m_1q,0)$. From \eqref{eq::range_t} in the proof of Theorem~\ref{thm::yd_size}, for all $x\in\mc X_{d,k,q}$, $\mc E_{\it cover}(x)$ has at most $\lceil k- D_1/q\rceil=k-m_1$ non-zero entries, i.e., $\mc E_{\it cover}(x)\in\mc X_{d,k-m_1,q}$. Hence, the inputs of $\mc E_{\it CS}$ are $(k-m_1)$-sparse vectors.
Substituting $k$ in \eqref{eq::cc_CS} by $k-m_1$, for $\pi_{\it CS}=(\mc E_{\it CS}, \mc D_{\it CS})\in\Pi(m_2q,\delta)$, the communication cost $\mc C(\pi_{\it cCS})$ is upper bounded by 
\begin{align}
\label{eq::comm_cost_comb_cs}
    &n\cdot \left\lceil\frac{200(k-m_1)^3q^2}{3(m_2q)^2}\right\rceil\cdot \left\lceil\log\left(\frac{d}{\delta}\right)\right\rceil\cdot \lceil\log (2\lceil k-D_1/q\rceil q+1)\rceil\notag\\
    &\qquad= \left\lceil\frac{200n(k-m_1)^3}{3m_2^2}\right\rceil\cdot \left\lceil\log\left(\frac{d}{\delta}\right)\right\rceil\cdot \lceil\log (2(k-m_1) q+1)\rceil.
\end{align}
Similarly, from \eqref{eq::cc_CM}, the communication cost $\mc C(\pi_{\it cCMS})$ is upper bounded by \begin{align}
\label{eq::comm_cost_comb_cms}
    &n\cdot \left\lceil\frac{4(k-m_1)^2q}{m_2q}\right\rceil\cdot \left\lceil\log\left(\frac{d}{\delta}\right)\right\rceil\cdot \lceil\log (\lceil k-D_1/q\rceil q+1)\rceil\notag\\
    &\qquad =\left\lceil\frac{4n(k-m_1)^2}{m_2}\right\rceil\cdot \left\lceil\log\left(\frac{d}{\delta}\right)\right\rceil\cdot \lceil\log ((k-m_1) q+1)\rceil
\end{align}

Recalling that the last logarithmic terms in \eqref{eq::comm_cost_comb_cs} and \eqref{eq::comm_cost_comb_cms} are the numbers of bits used in transmitting the values of the coordinates of a vector. We then focus on minimizing the dimensions of the outputs of $\mc E_{\it CS}\circ \mc E_{\it cover}$ and $\mc E_{\it CMS}\circ \mc E_{\it cover}$, i.e., \begin{equation}
\label{eq::cs_output_dim}
\left\lceil\frac{200(k-m_1)^3}{3m_2^2}\right\rceil
\end{equation} and \begin{equation}
\label{eq::cms_ouput_dim}
\left\lceil\frac{4(k-m_1)^2q}{m_2}\right\rceil,
\end{equation}
respectively, over $m_2\leq m$ and $m_1=m-m_2$, where $m,m_2\in\{0,1,\ldots,k-1\}$. 

By substituting $m_1$ by $m-m_2$, our targets become minimizing \begin{align}
\label{eq::comb_cs_cc_1}
\frac{(k-m+m_2)^3}{m_2^2} = \left(k-m+m_2\right)^3\cdot \left(\frac{1}{m_2}\right)^2= \left(k-m+m_2\right)^3\cdot \left(m_2^{-\frac{2}{3}}\right)^3= \left(\frac{k-m}{m_2^{2/3}}+m_2^{\frac{1}{3}}\right)^3
\end{align} for $\pi_{\it cCS}=(\mc E_{\it CS}\circ \mc E_{\it cover}, \mc D_{\it CS})$, and \begin{align}
\label{eq::comb_cms_cc_1}
\frac{(k-m+m_2)^2}{m_2} = \left(k-m+m_2\right)^2\cdot \left(\frac{1}{m_2}\right)= \left(k-m+m_2\right)^2\cdot \left(m_2^{-\frac{1}{2}}\right)^2= \left(\frac{k-m}{m_2^{1/2}}+m_2^{\frac{1}{2}}\right)^2
\end{align} for $\pi_{\it cCMS}=(\mc E_{\it CMS}\circ \mc E_{\it cover}, \mc D_{\it CMS})$.

Observe that \eqref{eq::comb_cs_cc_1} can be expressed as $f(x)=(cx^{-\frac{2}{3}}+x^{\frac{1}{3}})^3$ with $x=m_2$.
By elementary computation, it can be shown that \eqref{eq::comb_cs_cc_1} attains the minimum at \begin{equation*}
    m_2^* = \begin{cases}
        2(k-m) &\mbox{for}\ m\geq \lceil 2k/3\rceil \\
        m &\mbox{for}\ m<\lceil 2k/3 \rceil
    \end{cases}.
\end{equation*}
Accordingly, $m_1^* = m-m_2^*$ is given by \begin{equation*}
    m_1^* = \begin{cases}
        3m-2k &\mbox{for}\ m\geq \lceil 2k/3\rceil \\
        0 &\mbox{for}\  m<\lceil 2k/3 \rceil
    \end{cases}.
\end{equation*}
For $m<\lceil 2k/3\rceil$, by substituting $(m_1, m_2)$ in \eqref{eq::comm_cost_comb_cs} by $(m_1^*, m_2^*)=(0,m)$, the communication cost $\mc C(\pi_{\it cCS})$ is upper bounded by \[
    n\cdot \left\lceil\frac{200k^3}{3m^3q}\right\rceil\cdot \left\lceil\log\left(\frac{d}{\delta}\right)\right\rceil\cdot \lceil\log (2kq+1)\rceil.
\] 
For $m\geq \lceil 2k/3\rceil$, upon substituting $m_2$ by $m_2^*=2(k-m)$, \eqref{eq::comb_cs_cc_1} becomes \begin{align*}
    \frac{(k-m+m_2^*)^3}{m_2^{*2}}=\frac{(3(k-m))^3}{(2(k-m))^2}=\frac{27(k-m)}{4}.
\end{align*}
Then following \eqref{eq::cs_output_dim}, the dimension of the output of $\mc E_{\it CS}\circ \mc E_{\it cover}$ is upper bounded by $450(k-m)$.

Moreover, substituting $m_1^* = 3m-2k$ in the last logarithmic term in \eqref{eq::comm_cost_comb_cs}, we can upper bound $\mc C(\pi_{\it cCS})$ by
\begin{align*}
450 n\cdot (k-m)\cdot \left\lceil\log\left(\frac{d}{\delta}\right)\right\rceil\cdot \lceil\log (6(k-m)q+1)\rceil
\end{align*}
for $m\geq \lceil 2k/3\rceil$.

Applying the same process to $\pi_{\it cCMS}$, \eqref{eq::comb_cms_cc_1} attains the minimum at \begin{equation*}
    m_2^{*\prime} = \begin{cases}
        k-m &\mbox{for}\ m\geq \lceil k/2\rceil\\
        m &\mbox{for}\ m<\lceil k/2 \rceil
    \end{cases}.
\end{equation*}
Accordingly, $m_1^{*\prime} = m-m_2^{*\prime}$ is given by \begin{equation*}
    m_1^{*\prime} = \begin{cases}
        2m - k &\mbox{for}\  m\geq \lceil k/2\rceil\\
        0 &\mbox{for}\ m<\lceil k/2\rceil
    \end{cases}.
\end{equation*}
For $m<\lceil k/2\rceil$, by substituting $(m_1,m_2)$ in \eqref{eq::comm_cost_comb_cms} by $(m_1^{*\prime}, m_2^{*\prime})=(0,m)$ , the communication cost $\mc C(\pi_{\it cCMS})$
is upper bounded by \[
n\cdot \left\lceil\frac{4k^2}{m}\right\rceil\cdot \left\lceil\log\left(\frac{d}{\delta}\right)\right\rceil\cdot \lceil\log (kq+1)\rceil.
\] 
For $m\geq \lceil k/2\rceil$, by substituting $m_2$ by $m_2^{*\prime}$, \eqref{eq::comb_cms_cc_1} becomes \begin{align*}
    \frac{(k-m+m_2^{*\prime})^2}{m_2^{*\prime}}=\frac{4(k-m)^2}{(k-m)} = 4(k-m).
\end{align*}
Then following \eqref{eq::cms_ouput_dim}, the dimension of the output of $\mc E_{\it CMS}\circ \mc E_{\it cover}(x)$ is upper bounded by $16(k-m)$.

Moreover, substituting $m_1^{*\prime}=m-m_2^{*\prime}=2m-k$ in the last logarithmic term in \eqref{eq::comm_cost_comb_cms}, we can upper bound $\mc C(\pi_{\it cCMS})$ by
\begin{align*}
16n\cdot (k-m) \cdot \left\lceil\log\left(\frac{d}{\delta}\right)\right\rceil\cdot \lceil\log ( 2(k-m) q+1)\rceil
\end{align*}
when $m\geq \lceil k/2\rceil$.

\end{proof}
The numerical results of the upper bounds on $\mc C(\pi_{\it CMS})$ and $\mc C(\pi_{\it cCMS})$ are illustrated in Fig.~\ref{fig::results}. The results about $CS$ and its combination with the covering scheme are not included in the illustration since they are much larger than those of CMS. We first observe that combining the sketching methods with a covering code can potentially reduce the communication costs as the distortion crosses the threshold $\frac{q(k+1)}{2}$.
The communication cost of both $\pi_{\it CMS}$ and $\pi_{\it cCMS}$ under small error probability $\delta$ are larger than that of using the covering scheme only. 
As we mentioned in Section~\ref{sec::related_works}, it is necessary to employ randomness and such an increase in communication cost is inevitable and should not be regarded as a ``waste" of communication resources.

\begin{figure}
    \centering	
    \includegraphics[width=0.5\linewidth]{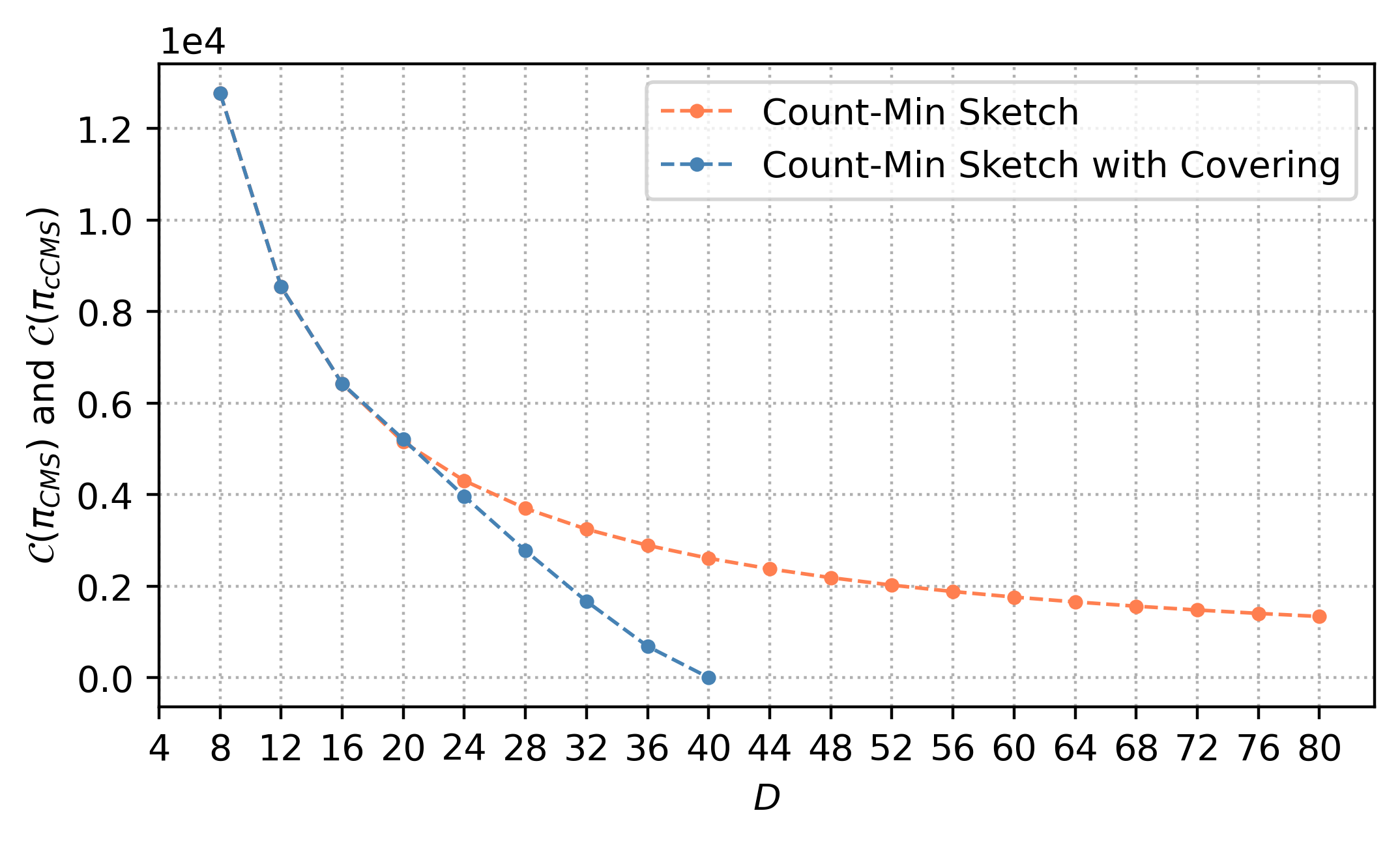}
    \caption{Numerical illustration of the upper bounds on the achievable communication costs using CMS and CMS combined with the covering scheme. In this illustration, we use $n=5$, $k=10$, $d=2^{10}$, $q=4$, $\delta=0.1$, and the values of $D$ are multiples of $q$. }
  \label{fig::results}
\end{figure}

\section{Converse using $\mathrm{f}$-Information Measure}
\label{sec::converse}
In this section, we obtain a lower bound on the communication cost $C(D,\delta)$ in terms of the distortion $D$ and error probability $\delta$. This lower bound applies to the case when the global model $U$ is a random variable with an underlying distribution $P_U$ with the constraint \begin{equation*}
    \pr(\|\hat{U}-U\|_1>D)<\delta
\end{equation*}
defined as \eqref{eq::accuracy}.
In the previous sections, the global model $U$ is treated as a constant vector $U=u$ since we require our schemes to apply to arbitrarily distributed $X_i$'s. The corresponding accuracy requirement on $\hat{U}=\mathcal{D}(\mathcal{E}(X^n))$ is \begin{equation}
\label{eq::accracy_requirement_constant_u}
\pr(\|\hat{U}-u\|_1>D)<\delta \text{ for any } u\in\mathcal{U},
\end{equation}
which implies that \eqref{eq::accuracy} is satisfied since
\begin{align*}
    \pr(\|\hat{U}-U\|_1>D) = \sum_{u\in\mc U}P_U(u)\pr(\|\hat{U}-u\|_1>D)< \sum_{u\in\mc U}P_U(u)\cdot \delta=\delta.
\end{align*}

For a random variable $U$, one can first relate the achievable communication cost with $I(U;\hat{U})$.
Then a natural choice of tool is the generalized Fano's inequality developed in \cite{zhang2013information}, which can lower bound $I(U;\hat{U})$ under the accuracy requirement. However, we show that by replacing Shannon's information measure with 
the f-information measure to obtain a tighter lower bound on the achievable communication costs under certain distortion $D$ and error probability $\delta$, which can strictly improve the converse provided by the generalized Fano's inequality in \cite{zhang2013information}.

The $\mathrm{f}$-divergence by Ali-Silvey~\cite{ali1966general} and Csiszar~\cite{csiszar1967information} is defined as 
\begin{align}
    \label{eq::f_div}
    D_\mathrm{f}(P\Vert Q) = \mathbb{E}_Q\left[\mathrm{f}\left(\frac{\mathrm{d}P}{\mathrm{d}Q}(X)\right)\right],
\end{align}
where $\mathrm{f}:(0,+\infty)\rightarrow \mathbb{R}$ is a convex function with $\mathrm{f}(1)=0$ and $\mathrm{f}(0)\triangleq \mathrm{f}(0^+)$, $P$ and $Q$ are two probability distributions on a measurable space $(\mathcal{X},\sigma(\mathcal{X}))$ with $P\ll Q$, and $\frac{\mathrm{d}P}{\mathrm{d}Q}$ denotes the Radon-Nikodym derivative. 

There have been multiple proposals for defining mutual f-information based on the f-divergence. Here we employ the one referred to as $I_\mathrm{f}^{\it CKZ}(X;Y)$ in \cite{masiha2023f}, defined as \begin{equation}
\label{eq::f_mutual_info}
    I_\mathrm{f}^{\it CKZ}(A;B) := D_\mathrm{f}(P_{AB}\Vert P_A\times P_B).
\end{equation}
Correspondingly, the f-entropy of a random variable $X$ on a finite discrete alphabet $\mathcal{X}$ is defined as \begin{align}
    \label{eq::f_entropy}
    H_\mathrm{f}^{\it CKZ}(X) \triangleq I_\mathrm{f}^{\it CKZ}(X;X) = \mathrm{f}(0)\left(1-\sum_{x\in\mc X}p_x^2\right) + \sum_{x\in\mc X}p_x^2\mathrm{f}\left(\frac{1}{p_x}\right).
\end{align} 

The following properties holds for the mutual f-information defined above. The proofs are omitted here.
\begin{proposition}[\cite{masiha2023f}]
\label{prop::renyi}
    Let $X$ and $Y$ be random variables defined on a finite set $\mc X$. Then \begin{enumerate}[label=(\roman*)]
        \item the function $p_X\mapsto H_\mathrm{f}(X)$ is maximized by the uniform distribution if $\mathrm{f}'''(x)\leq 0$ for $x>0$;
        \item $I_\mathrm{f}^{\it CKZ}(X;Y)\leq H_\mathrm{f}^{\it CKZ}(Y)$;
        \item let $P_X\rightarrow P_Y$ and $Q_X\rightarrow Q_Y$, where $P_Y$ and $Q_Y$ are obtained from $P_X$ and $Q_X$, respectively, through a fixed transition probability $A_{Y|X}$. Then \begin{equation*}
            D_\mathrm{f}(P_Y||Q_Y)\leq D_\mathrm{f}(P_X||Q_X).
        \end{equation*}
        \item $I_\mathrm{f}^{\it CKZ}(X;Z)\leq I_\mathrm{f}^{\it CKZ}(X;Y)$ if the Markov chain $X-Y-Z$ holds.
    \end{enumerate}
\end{proposition}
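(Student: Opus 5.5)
We would prove the four items in the order (iii), (iv), (ii), (i), since (ii) and (iv) are consequences of the data processing inequality (iii).

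For (iii) we use the standard Jensen argument. Write $P_Y(y)=\sum_x A_{Y|X}(y|x)P_X(x)$ and similarly for $Q_Y$, and assume $P_X\ll Q_X$ (otherwise the right-hand side is $+\infty$ and there is nothing to prove). For each $y$ with $Q_Y(y)>0$, introduce the reverse kernel $Q_{X|Y}(x|y)=A_{Y|X}(y|x)Q_X(x)/Q_Y(y)$. A direct computation gives
\begin{equation*}
\frac{P_Y(y)}{Q_Y(y)}=\sum_x Q_{X|Y}(x|y)\frac{P_X(x)}{Q_X(x)}.
\end{equation*}
Convexity of $\mathrm{f}$ and Jensen's inequality then give $\mathrm{f}(P_Y(y)/Q_Y(y))\le \sum_x Q_{X|Y}(x|y)\,\mathrm{f}(P_X(x)/Q_X(x))$. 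Averaging over $Q_Y$ and using $\sum_y Q_Y(y)Q_{X|Y}(x|y)=Q_X(x)$ yields $D_\mathrm{f}(P_Y\|Q_Y)\le D_\mathrm{f}(P_X\|Q_X)$. On a finite alphabet no measure-theoretic issues arise.

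For (iv), we apply (iii) with input variable $(X,Y)$ and the fixed kernel $(x,y)\mapsto(x,z)$ with $z\sim P_{Z|Y}(\cdot|y)$. By the Markov chain $X-Y-Z$, this kernel maps $P_{XY}$ to $P_{XZ}$, and it maps $P_X\times P_Y$ to $P_X\times P_Z$. Hence $I_\mathrm{f}^{\it CKZ}(X;Z)\le I_\mathrm{f}^{\it CKZ}(X;Y)$.

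For (ii), we first note that $I_\mathrm{f}^{\it CKZ}$ is symmetric, since swapping coordinates is a bijection that preserves the divergence. Next, $Y-Y-X$ is trivially a Markov chain: $X$ is produced from $Y$ through $P_{X|Y}$. Applying (iv) gives
\begin{equation*}
I_\mathrm{f}^{\it CKZ}(X;Y)=I_\mathrm{f}^{\it CKZ}(Y;X)\le I_\mathrm{f}^{\it CKZ}(Y;Y)=H_\mathrm{f}^{\it CKZ}(Y).
\end{equation*}

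For (i), we rewrite $H_\mathrm{f}(X)=\mathrm{f}(0)+\sum_x g(p_x)$ with $g(p)=p^2\mathrm{f}(1/p)-\mathrm{f}(0)p^2$. A symmetric sum of a concave function on the simplex is Schur-concave, so it is maximized at the uniform distribution. It therefore suffices to show $g''\le 0$ on $(0,1]$; this is the main step. Differentiating and substituting $t=1/p$ gives
\begin{equation*}
g''(p)=\varphi(t):=2\mathrm{f}(t)-2t\mathrm{f}'(t)+t^2\mathrm{f}''(t)-2\mathrm{f}(0).
\end{equation*}
We would avoid delicate boundary limits at $0^+$ by using Taylor's formula with integral remainder at $0$:
\begin{equation*}
\mathrm{f}(0)=\mathrm{f}(t)-t\mathrm{f}'(t)+\int_0^t s\,\mathrm{f}''(s)\,ds,
\end{equation*}
assuming $\mathrm{f}(0)$ is finite, with the obvious limiting convention otherwise. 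This turns $\varphi$ into
\begin{equation*}
\varphi(t)=t^2\mathrm{f}''(t)-2\int_0^t s\,\mathrm{f}''(s)\,ds.
\end{equation*}
Since $\mathrm{f}'''\le 0$, the function $\mathrm{f}''$ is nonincreasing, so $\int_0^t s\,\mathrm{f}''(s)\,ds\ge \mathrm{f}''(t)\,t^2/2$. Hence $\varphi(t)\le 0$ and $g$ is concave. We expect justifying this remainder formula near $0$ to be the only technical point. It needs integrability of $s\,\mathrm{f}''(s)$, which we would obtain from monotonicity of $\mathrm{f}''$ together with finiteness of $\mathrm{f}(0^+)$ and of $\lim_{t\to0}t\mathrm{f}'(t)$; if necessary we would first truncate the integral at $\epsilon>0$ and then let $\epsilon\to0$.
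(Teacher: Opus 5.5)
The paper gives no proof of this proposition: it says ``The proofs are omitted here'' and defers to \cite{masiha2023f}. Your argument is therefore a self-contained substitute rather than a variant of the paper's route, and its structure is sound.
\begin{itemize}
\item (iii) follows from Jensen's inequality via the reverse kernel $Q_{X|Y}$.
\item (iv) is (iii) applied to the kernel $(x,y)\mapsto(x,z)$, which sends $P_{XY}\mapsto P_{XZ}$ and $P_X\times P_Y\mapsto P_X\times P_Z$ under $X-Y-Z$.
\item (ii) is (iv) applied to the chain $Y-Y-X$, together with symmetry of $I_\mathrm{f}$.
\item For (i), your Taylor-remainder identity $g''(p)=t^2\mathrm{f}''(t)-2\int_0^t s\,\mathrm{f}''(s)\,ds$ with $t=1/p$ is correct, and it shows exactly where $\mathrm{f}'''\le 0$ enters.
\end{itemize}
That explicitness is what your route buys over the bare citation. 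For example, $\mathrm{f}(t)=(t^\lambda-1)/(\lambda-1)$ has $\mathrm{f}'''(t)=\lambda(\lambda-2)t^{\lambda-3}$, so (i) is available only for $\lambda\le 2$. For $\lambda>2$, $H_\mathrm{f}(X)=\frac{1}{\lambda-1}\bigl(\sum_{x:p_x>0}p_x^{2-\lambda}-1\bigr)$ is unbounded on the simplex, so the uniform distribution is not a maximizer.

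One step in (i) needs an extra line. A symbol with $p_x=0$ contributes $0$ to $H_\mathrm{f}(X)$, because it carries no mass under $P_{XX}$ or $P_X\times P_X$; it does not contribute $g(0^+)$. So the function you are really summing is $\tilde g$, with $\tilde g=g$ on $(0,1]$ and $\tilde g(0)=0$. Concavity of $g$ on $(0,1]$ does not by itself give Schur-concavity on the closed simplex. The issue is not vacuous: for $\mathrm{f}(t)=(t-1)^2$ you get $g(p)=1-2p$, so $g(0^+)=1\neq 0$.

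The fix is short. Note that $g(0^+)=\lim_{t\to\infty}\mathrm{f}(t)/t^2\ge 0$, since convexity and $\mathrm{f}(1)=0$ give $\mathrm{f}(t)\ge \mathrm{f}'(1)(t-1)$. Lowering the endpoint value of a concave function keeps it concave, so $\tilde g$ is concave on $[0,1]$. Alternatively, run your argument on a support of size $m$ to get $H_\mathrm{f}(X)\le \mathrm{f}(0)+(\mathrm{f}(m)-\mathrm{f}(0))/m$. This secant slope is nondecreasing in $m$ by convexity, and it is precisely the quantity appearing in \eqref{eq::lower_bd_f_yn}.

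A smaller point: in (iii) you say the right-hand side is $+\infty$ when $P_X\not\ll Q_X$. That is false for general $\mathrm{f}$ (total variation is a counterexample), but it is moot here because the paper defines $D_\mathrm{f}$ only for $P\ll Q$.
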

Since $I_\mathrm{f}^{\it CKZ}(X;Y)$ is the mutual f-information employed in the rest of this section, we will ignore the superscript CKZ and write it as $I_\mathrm{f}(X;Y)$. Likewise, we will write $H_\mathrm{f}^{\it CKZ}(X)$ as $H_\mathrm{f}(X)$.

Consider our communication-accuracy tradeoff problem for sparse aggregation.
For any scheme $\pi=(\mc E,\mc D)$ with $\mc E(X)\in\mc Y$, we have $\mc E(X^n)\in\mc Y^n$ and, if $f$ satisfies $f'''(x)\leq 0$ for $x>0$, then
\begin{align}
\label{eq::lower_bd_f_yn}
    \frac{\mathrm{f}(|\mc Y|^n)-\mathrm{f}(0)}{|\mc Y|^n} + \mathrm{f}(0) \utag{a}{\geq}  H_\mathrm{f}(\mc E(X^n))\utag{b}{\geq} I_\mathrm{f}(X^n;\mc E(X^n)),
\end{align} 
where \uref{a} holds by Proposition~\ref{prop::renyi} (i) with $p_x$ replaced by $1/|\mc Y^n|$ in \eqref{eq::f_entropy}, and \uref{b} holds by Proposition~\ref{prop::renyi} (ii). In the following theorem, we lower bound the mutual f-information $I_\mathrm{f}(X^n;\mc E(X^n))$ under the distortion constraint, and then obtain a lower bound on $C(\pi)$ for any $\pi\in\Pi(D,\delta)$.

\begin{theorem}
\label{thm::lower_bd_f_div}
    Let \[
    P_{\max}(D) = \max_{u'\in\mc U} \, \sum_{u\in\mc U}P_U(u)\cdot\mathds{1}(\|u-u'\|_1\leq D).
    \]
    For any $D\geq 0$, $0\leq \delta\leq 1-P_{\max}(D)$ (or equivalently $P_{\max}(D)\leq 1-\delta\leq 1$), 
    and any convex function $\mathrm{f}:(0,+\infty)\rightarrow\mathbb{R}$ with \begin{itemize}
        \item $\mathrm{f}(1)=0$, and
        \item $\mathrm{f}'''(x)\leq 0$ for any $x>0$,
    \end{itemize}
    the alphabet size $|\mc Y|$ for $\mc E(X)\in\mc Y$ of any $\pi=(\mc E,\mc D)\in\Pi(D,\delta)$ satisfies \begin{align}
        \frac{\mathrm{f}(|\mc Y|^n)-\mathrm{f}(0)}{|\mc Y|^n} \geq \left(1-P_{\max}(D)\right)\cdot\mathrm{f}\left(\frac{\delta}{1-P_{\max}(D)}\right) + P_{\max}(D)\cdot\mathrm{f}\left(\frac{1-\delta}{P_{\max}(D)}\right)-\mathrm{f}(0),
        \label{eq::converse_f_div}
    \end{align}
    where $\mathrm{f}(0)\triangleq \mathrm{f}(0^+)$.
    
\end{theorem}
\begin{proof}
    First, by Proposition~\ref{prop::renyi} (iv) and the Markov chain $U-X^n-\mc E(X^n)$ for any $\mc E:\mc X\rightarrow\mc Y$, we have \begin{align*}
        I_\mathrm{f}(X^n;\mc E(X^n)) \geq I_\mathrm{f}(U;\mc E(X^n)).
    \end{align*}
    Since $\hat{U}$ is a function of $\mc E(X^n)$, the Markov chain $U-\mc E(X^n)-\hat{U}$ holds. Then by Proposition~\ref{prop::renyi} (iv), we have \[
    I_\mathrm{f}(U;\mc E(X^n)) \geq I_\mathrm{f}(U;\hat{U}).
    \]
    
    Consider in Proposition~\ref{prop::renyi} (iii) the transition probability induced by the data processor \begin{equation*}
        (U, \hat{U})\mapsto\mathds{1}(\|U-\hat{U}\|_1\leq D).
    \end{equation*} Then $P_{U,\hat{U}}$ is mapped to $\mathrm{Ber}(q)$, where \begin{align*}
    q := \sum_{u,u'\in\mc U}P_{U,\hat{U}}(u,u')\mathds{1}(\|u-u'\|_1\leq D)= \pr(\|U-\hat{U}\|_1\leq D),
    \end{align*}
    and $P_UP_{\hat{U}}$ is mapped to $\mathrm{Ber}(p)$, where \begin{align*}
        p:=\sum_u\sum_{u'} P_U(u)P_{\hat{U}}(u') \mathds{1}(\|u-u'\|_1 \leq D)\leq \max_{u'\in\mc U}\, \sum_u P_U(u)\cdot \mathds{1}(\|u-u'\|_1 \leq D) = P_{\max}(D).
    \end{align*}  
    The accuracy constraint \eqref{eq::accuracy} requires that $q\geq 1-\delta$. Since the achievable communication cost increases as the error probability $\delta$ decreases,
    we set $q=\pr(\|U-\hat{U}\|_1)$ to be $1-\delta$ in order to obtain a lower bound.
    
    Using Proposition~\ref{prop::renyi} (iii), we then lower bound $I_\mathrm{f}(U;\hat{U})$ by
    \begin{align}
    \label{eq::f_div_binary}
        I_\mathrm{f}(U;\hat{U}) = D_\mathrm{f}(P_{U,\hat{U}}\Vert P_UP_{\hat{U}})\geq D_\mathrm{f}(\mathrm{Ber}(q)\Vert \mathrm{Ber}(p)) = D_\mathrm{f}(\mathrm{Ber}(1-\delta)\Vert \mathrm{Ber}(p)).
    \end{align}
    Next, we show that $D_\mathrm{f}(\mathrm{Ber}(1-\delta)\Vert \mathrm{Ber}(p))$ is non-increasing in $p$, implying that \[
    D_\mathrm{f}(\mathrm{Ber}(1-\delta)\Vert \mathrm{Ber}(p)) \geq D_\mathrm{f}(\mathrm{Ber}(1-\delta)\Vert \mathrm{Ber}(P_{\max}(D))).
    \]
    First observe that, since $\delta\leq 1-P_{\max}(D)\leq 1-p$, we have \[
    \frac{\delta}{1-p} \leq \frac{1-\delta}{p}.
    \]
    Write \eqref{eq::f_div_binary} as \begin{align*}
        g(p) \triangleq p\cdot\mathrm{f}\left(\frac{1-\delta}{p}\right) + (1-p)\cdot \mathrm{f}\left(\frac{\delta}{1-p}\right).
    \end{align*}
    Then we have \begin{align*}
    g'(p)&= \frac{\delta}{1-p}\mathrm{f}'\left(\frac{\delta}{1-p}\right) - \frac{1-\delta}{p}\mathrm{f}'\left(\frac{1-\delta}{p}\right) + \left(\mathrm{f}\left(\frac{1-\delta}{p}\right) - \mathrm{f}\left(\frac{\delta}{1-p}\right) \right)\\
    & \utag{a}{\leq} \frac{\delta}{1-p}\mathrm{f}'\left(\frac{\delta}{1-p}\right) - \frac{1-\delta}{p}\mathrm{f}'\left(\frac{1-\delta}{p}\right)  + \mathrm{f}'\left(\frac{1-\delta}{p}\right)\left(\frac{1-\delta}{p}-\frac{\delta}{1-p}\right)\\
    & = \frac{\delta}{1-p}\cdot \left(\mathrm{f}'\left(\frac{\delta}{1-p}\right)-\mathrm{f}'\left(\frac{1-\delta}{p}\right)\right) \utag{b}{\leq} 0,
    \end{align*}
    where (a) holds since for the convex function $f(t)$, $f(a)-f(b)\geq f'(b)(a-b)$ for $a,b\geq 0$, and (b) holds since $f'$ is non-decreasing by the convexity of $f$. 
    
    Therefore, $D_\mathrm{f}(\mathrm{Ber}(1-\delta)\Vert \mathrm{Ber}(p))$ is non-increasing in $p$ and we can lower bound \eqref{eq::f_div_binary} by $g(P_{\max}(D))$, or \[
    P_{\max}(D)\cdot\mathrm{f}\left(\frac{1-\delta}{P_{\max}(D)}\right)+\left(1-P_{\max}(D)\right)\cdot\mathrm{f}\left(\frac{\delta}{1-P_{\max}(D)}\right).
    \]
    Replacing $I_\mathrm{f}(X^n;\mc E(X^n))$ in \eqref{eq::lower_bd_f_yn} by this lower bound, we have \begin{align}
        \frac{\mathrm{f}(|\mc Y|^n)-\mathrm{f}(0)}{|\mc Y|^n} + \mathrm{f}(0)\geq P_{\max}(D)\cdot\mathrm{f}\left(\frac{1-\delta}{P_{\max}(D)}\right)  +\left(1-P_{\max}(D)\right)\cdot\mathrm{f}\left(\frac{\delta}{1-P_{\max}(D)}\right).\notag
    \end{align}

\end{proof}

Now, consider the convex function $\mathrm{f}(t)=t\log t$ with $\mathrm{f}(0)\triangleq 0$, which also satisfies the conditions in Theorem~\ref{thm::lower_bd_f_div}. With this choice of $\mathrm{f}$, $D_\mathrm{f}(P\Vert Q)$ becomes the Kullback-Leibler divergence (KL divergence). Substituting this $\mathrm{f}$ into \eqref{eq::converse_f_div}, we obtain \begin{align}
\label{eq::converse_shannon_im}
C(D,\delta) \geq n\log |\mc Y|\geq \delta\log\frac{\delta}{1-P_{\max}(D)} + (1-\delta)\log \frac{1-\delta}{P_{\max}(D)},
\end{align}
which is usually referred to as the generalized Fano's inequality \cite{zhang2013information} \cite{scarlett2019introductory}. 

Next, consider the function $\mathrm{f}(t)=\frac{t^\lambda-1}{\lambda-1}$ for $\lambda>1$, which satisfies the requirements in Theorem~\ref{thm::lower_bd_f_div}. Noting that $\mathrm{f}(0)=-\frac{1}{\lambda - 1}$ and following \eqref{eq::converse_f_div}, we then have \begin{align*}
    &\frac{\frac{1}{\lambda - 1}(|\mc Y^n|^\lambda-1)-(-\frac{1}{\lambda - 1})}{|\mc Y^n|}\\
    &\qquad \geq (1-P_{\max}(D))\cdot\frac{1}{\lambda - 1} \left(\frac{\delta^\lambda}{(1-P_{\max}(D))^\lambda}-1\right) + P_{\max}(D)\cdot \frac{1}{\lambda - 1}\left(\frac{(1-\delta)^\lambda}{P_{\max}(D)^\lambda}-1\right) - \frac{-1}{\lambda-1}\\
    &\qquad = \frac{1}{\lambda - 1} \bigg((1-P_{\max}(D))\left(\frac{\delta}{1-P_{\max}(D)}\right)^\lambda  + P_{\max}(D)\left(\frac{1-\delta}{P_{\max}(D)}\right)^\lambda\bigg).
\end{align*}
That is, \begin{align*}
|\mc Y^n|\geq \bigg((1-P_{\max}(D))\left(\frac{\delta}{1-P_{\max}(D)}\right)^\lambda  + P_{\max}(D)\left(\frac{1-\delta}{P_{\max}(D)}\right)^\lambda\bigg)^{\frac{1}{\lambda-1}}.
\end{align*}
Taking logarithm on both sides, for any alphabet $\mc Y$ of $\mc E(X)$ for $\pi=(\mc E,\mc D)\in\Pi(D,\delta)$, we have \begin{align}
\label{eq::converse_chi_2}
C(D,\delta)\geq n\log|\mc Y| \geq \frac{1}{\lambda - 1}\log\bigg((1-P_{\max}(D))\left(\frac{\delta}{1-P_{\max}(D)}\right)^\lambda  + P_{\max}(D)\left(\frac{1-\delta}{P_{\max}(D)}\right)^\lambda\bigg).
\end{align}

\begin{figure}
    \centering	
    \includegraphics[width=0.5\linewidth]{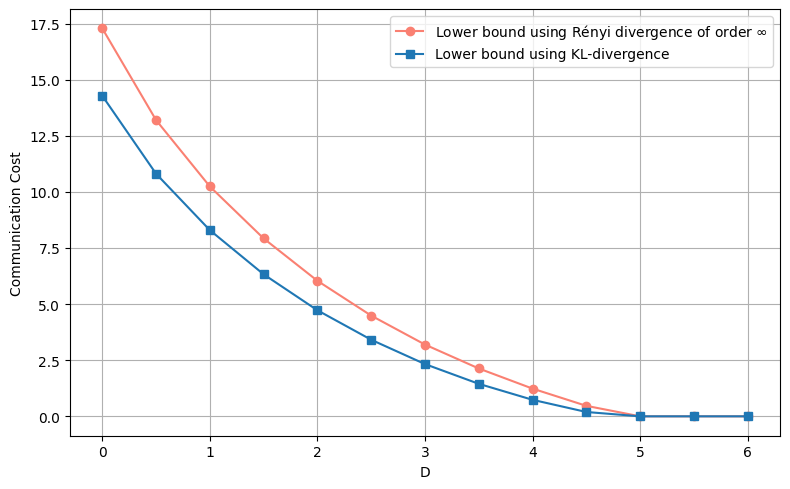}
    \caption{Comparison between the converse results given by \eqref{eq::chi_2_lb_uniform_u} and \eqref{eq::mi_lb_uniform_u} under uniform distribution assumption of $U$ with $\delta=0.15$, $n=2$, $d=11$, $k=3$, and $q=2$.}
  \label{fig::lower_bd_compare}
\end{figure}

To compare the lower bounds in \eqref{eq::converse_shannon_im} and \eqref{eq::converse_chi_2}, we use the R\'enyi divergence of order $\lambda$, defined as \[
D_\lambda(P\Vert Q) = \frac{1}{\lambda - 1}\log \mathbb{E}_Q\left[\left(\frac{\mathrm{d}P}{\mathrm{d}Q}\right)^\lambda\right],
\]
and $D_1(P\Vert Q)=D_{\it KL}(P\Vert Q)$ by taking $\lambda\rightarrow 1$.
One can observe that the right-hand-side of \eqref{eq::converse_shannon_im} and \eqref{eq::converse_chi_2} are $D_1(\mathrm{Ber}(1-\delta)\Vert \mathrm{Ber}(P_{\max}(D)))$ and $D_\lambda(\mathrm{Ber}(1-\delta)\Vert \mathrm{Ber}(P_{\max}(D)))$, respectively. Since $D_\lambda(P\Vert Q)$ is non-decreasing in $\lambda$, and often strictly so \cite{van2014renyi}, the lower bound provided in \eqref{eq::converse_chi_2} improves the one in \eqref{eq::converse_shannon_im}, regardless of the values of $\delta$ and $P_{\max}(D)$. Moreover, as $\lambda\rightarrow\infty$, \begin{align*}
C(D,\delta) \geq \lim_{\lambda\rightarrow\infty}D_\lambda(\mathrm{Ber}(1-\delta)\Vert \mathrm{Ber}(P_{\max}(D)))= \log \, \max\left\{\frac{1-\delta}{P_{\max}(D)}, \frac{\delta}{1-P_{\max}(D)}\right\},
\end{align*}
which is $\log \frac{1-\delta}{P_{\max}(D)}$ under the assumption in Theorem~\ref{thm::lower_bd_f_div}.

To numerically evaluate the lower bounds, we still need to determine the value of $P_{\max}(D)$, which depends on the distribution of $U$. Since the lower bound applies to an arbitrarily distributed $U$, for the convenience of comparison, we assume in the rest of this section that $U$ is uniformly distributed on $\mc U$, implying that $P_{\max}(D) = N_{\max}(D)/|\mc U|$, where \[
N_{\max}(D) = \max_{u'\in\mc U}|\{u\in\mc U:\|u-u'\|_1\leq D\}|.
\]
Then, by $\lambda\rightarrow \infty$ we have \begin{equation}
\label{eq::chi_2_lb_uniform_u}
    C(D,\delta)\geq \log \frac{(1-\delta)|\mc U|}{P_{\max}(D)},
\end{equation} and \eqref{eq::converse_shannon_im} becomes \begin{align}
\label{eq::mi_lb_uniform_u}
    C(D,\delta)\geq  \delta\log \frac{\delta|\mc U|}{|\mc U|-N_{\max}(D)}  + (1-\delta)\log \frac{(1-\delta)|\mc U|}{N_{\max}(D)}.
\end{align}

Fig.~\ref{fig::lower_bd_compare} illustrates the lower bounds on $C(D,\delta)$ numerically for $\delta=0.15$ using $\lambda=1$ and $\lambda=\infty$. It is observed that $D_\infty(P\Vert Q)$ provides a tighter lower bound under the same $D$ and $\delta$. Small parameters are chosen so that the ratio $N_{\max}(D)/|\mc U|$ can be computed exactly.\footnote{The complexity in evaluating the lower bounds in \eqref{eq::chi_2_lb_uniform_u} and \eqref{eq::mi_lb_uniform_u} is mainly on obtaining $N_{\max}(D)$ and $|\mathcal{U}|$ through counting.} 

Note that when $\delta=0$, the lower bounds on $C(D,0)$ given in \eqref{eq::chi_2_lb_uniform_u} and \eqref{eq::mi_lb_uniform_u} coincide with each other, which is \begin{equation}
    C(D,0) \geq \log \frac{|\mathcal{U}|}{N_{\max}(D)}.
    \label{eq::uniform_u_lb}
\end{equation} This is exactly the lower bound used for comparison in Fig~\ref{fig::delta_0_compare}.

Furthermore, we can prove an explicit lower bound on $C(D,0)$ in terms of $d,k,q$, and $D$. When $q=1$, this lower bound is close to the achievable communication cost using the covering scheme.
\begin{theorem}
\label{thm::converse_approx}
    For any distortion $D\geq 0$,  \begin{align}
        C(D,0) \geq \left(\log \frac{\sum_{c=0}^{k}\binom{d}{nc}q^{nc}}{\sum_{h=0}^{\lfloor nD\rfloor}\sum_{i=\lceil h/nq\rceil}^{\min\{h,2nk\}}\binom{d}{i} \cdot 2^i \cdot N(h,i,nq)}\right)^+,
        \label{eq::converse_approx}
    \end{align}
    where \[
    N(h,i,nq):=\sum_{j=0}^{\min\{i,\lfloor\frac{h-i}{nq}\rfloor\}} (-1)^j\binom{i}{j}\binom{h -jnq-1}{i-1}.
    \]
    \label{thm::counting_converse}
\end{theorem}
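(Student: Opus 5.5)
We start from the bound \eqref{eq::uniform_u_lb}, $C(D,0)\geq \log (|\mc U|/N_{\max}(D))$, which follows from Theorem~\ref{thm::lower_bd_f_div} with $U$ uniform on $\mc U$ and $\delta=0$. Since $C(D,0)\geq 0$ trivially, the $(\cdot)^+$ is harmless. It then suffices to prove two counting estimates: a lower bound on $|\mc U|$ matching the numerator of \eqref{eq::converse_approx}, and an upper bound on $N_{\max}(D)$ matching the denominator. To make both integral, we work with the scaled aggregate $nU=\sum_i X_i\in[0:nq]^d$. Then $\|u-u'\|_1\leq D$ becomes $\|nu-nu'\|_1\leq nD$, and since all entries are integers, this is equivalent to $\|nu-nu'\|_1\leq\lfloor nD\rfloor$.

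\textbf{Lower bound on $|\mc U|$.} For each $c\in[0:k]$ we exhibit distinct aggregates. Choose a support $S\subseteq[d]$ with $|S|=nc$, which is possible since $d\gg nk$. Split $S$ into $n$ disjoint blocks of size $c$, one per client, and let client $i$ put arbitrary values in $[1:q]$ on its block. Each $X_i$ is then $c$-sparse, hence $k$-sparse. The resulting $nU$ has support exactly $S$ and arbitrary values in $[1:q]$ on $S$, so it recovers $(S,\text{values})$. This gives $\binom{d}{nc}q^{nc}$ distinct elements of $\mc U$ for each $c$. Different $c$ give different support sizes, so the families are disjoint and summing over $c$ yields the numerator. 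Strictly, this counts values in $[1:q]^{nc}$; I would take that as the intended count, since $q^{nc}$ versus $(q)^{nc}$ with zero allowed does not change it.

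\textbf{Upper bound on $N_{\max}(D)$.} Fix any center $u'$. Every $u$ in the ball is determined by $v=n(u-u')\in\bb Z^d$, so we count integer vectors $v$ satisfying three constraints: $\|v\|_1=h\leq\lfloor nD\rfloor$, $|v(j)|\leq nq$ (both $nu$ and $nu'$ lie in $[0:nq]$), and $\|v\|_0=i\leq 2nk$ (both $nu$ and $nu'$ are $nk$-sparse). For fixed $h$ and $i$, we choose the support in $\binom{d}{i}$ ways and the signs in $2^i$ ways. The magnitudes are compositions of $h$ into $i$ positive parts, each at most $nq$. By inclusion–exclusion on the parts exceeding $nq$, there are exactly $\sum_j(-1)^j\binom{i}{j}\binom{h-jnq-1}{i-1}=N(h,i,nq)$ of them. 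The range of $i$ follows from $i\leq h$ (parts are at least $1$), $i\geq h/(nq)$ (parts are at most $nq$), and $i\leq 2nk$. Summing over $h$ and $i$ bounds the ball size for every $u'$, hence bounds $N_{\max}(D)$. The case $h=0$ contributes the single vector $v=0$; the stated formula must be read with the convention that $i=0$ is counted there, so I will note this edge case explicitly.

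\textbf{Main obstacle.} The delicate step is the bookkeeping in the inclusion–exclusion formula: the upper summation limit $\lfloor (h-i)/(nq)\rfloor$ and the convention for binomials with negative arguments. Everything else is a direct injection argument. Combining the two counting bounds in \eqref{eq::uniform_u_lb} completes the proof.
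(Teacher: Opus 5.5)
Your proposal is correct and follows the paper's proof essentially step for step. Like the paper, you start from \eqref{eq::uniform_u_lb}, lower-bound $|\mc U|$ by aggregating $c$-sparse local vectors with disjoint supports, and upper-bound $N_{\max}(D)$ by counting $\Delta=n(u-u')$ through its support ($\binom{d}{i}$ ways), its signs ($2^i$ ways), and bounded compositions of $h$ ($N(h,i,nq)$ ways). Your remark on the $h=0$ term is an extra point the paper does not make: that term counts $\Delta=0$ only under the convention $\binom{-1}{-1}=1$, which gives $N(0,0,nq)=1$.
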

\begin{proof}
    Since when $\delta=0$, both \eqref{eq::chi_2_lb_uniform_u} and \eqref{eq::mi_lb_uniform_u} reduce to \eqref{eq::uniform_u_lb}, we further relax this lower bound by providing a lower bound on $|\mathcal{U}|$ and an upper bound on $N_{\max}(D)$.

    First, we lower bound $|\mc U|$.
    Because $d\gg nk$, we can choose the supports of the $n$ vectors $x_1,\ldots,x_n$ to be disjoint. Suppose each vector $x_i$ has $c$, $c\leq k$, non-zero coordinates. Then no coordinate is ever added with another when $x_1,\ldots,x_n$ are aggregated into $u$, which has exactly $nc$ non-zero coordinates taking values in $\{1/n,2/n,\ldots,q/n\}$. That is, \[
    \mc U \supseteq \{u\in\{0,1/n,\ldots,q/n\}^d: \|u\|_0=nc, c\in[0:k]\},
    \] implying that \begin{equation}
    |\mathcal{U}| \geq \sum_{c=0}^k \binom{d}{nc}q^{nc}.
    \label{eq::lower_bd_U}
    \end{equation}

    Next, we upper bound $N_{\max}(D)$. Consider $\{u\in\mathcal{U}: \|u-u'\|_1\leq D\}$ for a fixed $u'\in\mathcal{U}$. Denote \[
    \Delta := n(u-u')\] with
    \[\|\Delta\|_0=i \quad \text{and} \quad \|\Delta\|_1=h,
    \] 
    where $i\leq 2nk$ and $h\in\{0,1,\ldots,\lfloor nD\rfloor\}$.
    Since $|\Delta(i)|\geq 1$ if $\Delta(i)\neq 0$ and $|\Delta(i)|\leq nq$ for all $i$,
    we have
    \[
    i = \|\Delta\|_0\leq \|\Delta\|_1=h\leq nq\cdot \|\Delta\|_0=nqi,
    \]
    implying that \[
    \left\lceil\frac{h}{nq}\right\rceil \leq i\leq \min\{h,2nk\}.
    \]
    After selecting $i$ out of $d$ entries to be nonzero, the number of possible $\Delta$ is $2^i\cdot N(h,i,nq)$ \cite{van_Lint_Wilson_2001}, where \begin{align*}
    N(h,i,nq):=\sum_{j=0}^{\min\{i,\lfloor\frac{h-i}{nq}\rfloor\}} (-1)^j\binom{i}{j}\binom{h -jnq-1}{i-1}
    \end{align*}
    is the number of bounded positive integer solutions of \[
    b_1+b_2+\ldots+b_i = h  \,\text{ s.t. }\, b_j\in[nq], 1\leq j\leq i.
    \]
    Here we need to multiply $N(h,i,nq)$ by $2^i$ because the non-zero entries in $\Delta$ is in $\{\pm 1,\ldots,\pm nq\}$. Hence, we can conclude that \begin{equation}
    N_{\max}(D) \leq \sum_{h=0}^{\lfloor nD\rfloor}\sum_{i=\lceil h/nq\rceil}^{\min\{h,2nk\}}\binom{d}{i} \cdot 2^i \cdot N(h,i,nq).
    \label{eq::upper_bd_N_max_D}
    \end{equation}
    Substituting \eqref{eq::lower_bd_U} and \eqref{eq::upper_bd_N_max_D} into \eqref{eq::uniform_u_lb}, we have \begin{align*}
    C(D,0) \geq \log \frac{\sum_{c=0}^{k}\binom{d}{nc}q^{nc}}{\sum_{h=0}^{\lfloor nD\rfloor}\sum_{i=\lceil h/nq\rceil}^{\min\{h,2nk\}}\binom{d}{i} \cdot 2^i \cdot N(h,i,nq)}.
        \end{align*}
    Since the communication cost $C(D,0)$ is non-negative by definition, we obtain the result as stated in this theorem.
\end{proof}

\begin{figure}
    
\begin{subfigure}{\columnwidth}
    \centering
    \includegraphics[width=0.4\linewidth]{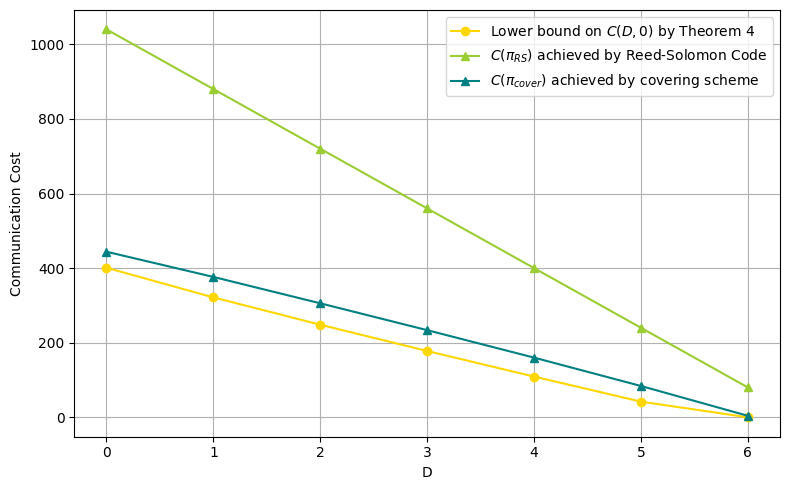}
    \caption{Comparison between the achievable communication cost using the covering scheme, Reed-Solomon code scheme, and the lower bound in Theorem~\ref{thm::converse_approx}, with $\delta=0$, $n=4$, $k=6$, $d=2^{20}$, and $q=1$.}
    \label{fig::result_q_1}
\end{subfigure}
\begin{subfigure}{\columnwidth}
    \centering
    \includegraphics[width=0.4\linewidth]{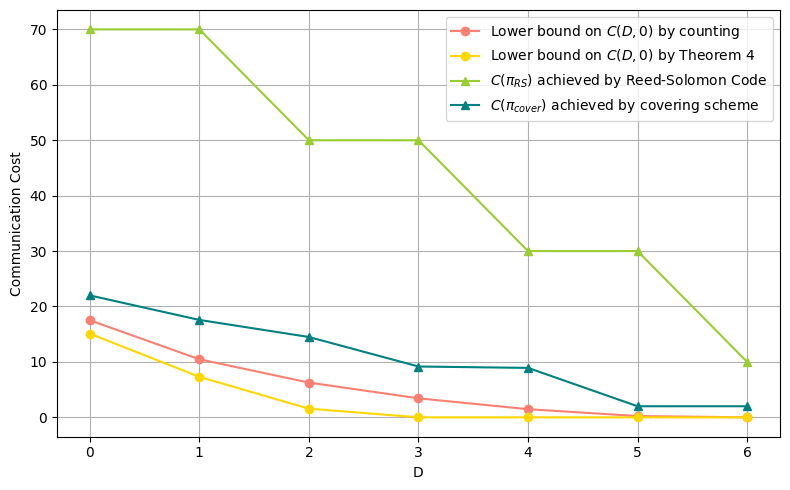}
    \caption{Comparison between the achievable communication cost using the covering scheme, Reed-Solomon code scheme, the lower bound in Theorem~\ref{thm::converse_approx}, and the lower bound in \eqref{eq::chi_2_lb_uniform_u} by counting, with $\delta=0$, $n=2$, $k=3$, $d=11$, and $q=2$.}
    \label{fig::result_q_2_k_3}
\end{subfigure}
        
\caption{Numerical results of the achievable communication costs and the lower bounds.}
\label{fig::delta_0_compare}
\end{figure}

We illustrate the numerical results of this theorem in Fig.~\ref {fig::delta_0_compare}. For the case with $q=1$, which is usually referred to as the \emph{frequency estimation problem}, the gap between the communication cost achieved by the covering scheme and the lower bound in Theorem~\ref{thm::counting_converse} is small. For $q\geq 2$, the performance of the lower bound in \eqref{eq::converse_approx} is worse, and so we also compare the achievable communication costs with the lower bound obtained by counting $|\mathcal{U}|$ and $N_{\max}(D)$, namely \eqref{eq::uniform_u_lb}.

\begin{remark}
    Here, we analyze the gap between the lower bound in \eqref{eq::converse_approx} and the communication cost achieved by the covering scheme when $q=1$. Under the assumption $d\gg nk$, we can approximate the numerator in \eqref{eq::converse_approx} by \[
    \sum_{c=0}^k \binom{d}{nc} q^{nc} = \sum_{c=0}^k \binom{d}{nc} \geq \binom{d}{nk} \approx \frac{d^{nk}}{(nk)!}.
    \]
    Consider the denominator in \eqref{eq::converse_approx}. Since \begin{align*}
    \sum_{h=0}^{\lfloor nD\rfloor} N(h,i,n)=|\{(b_1,\ldots,b_i)\in[n]^i: \sum_{j=1}^i b_j\leq \lfloor nD\rfloor\}|\leq|\{(b_1,\ldots,b_i)\in[n]^i\}|,
    \end{align*}
    we can upper bound $\sum_h N(h,i,n)$ by $n^i$. Therefore, it follows from \eqref{eq::upper_bd_N_max_D} that  \[
    N_{\max}(D)\leq \sum_{i=0}^{\lfloor nD\rfloor}\binom{d}{i}\cdot 2^i\sum_{h=0}^{\lfloor nD\rfloor }N(h,i,n)\leq \sum_{i=0}^{\lfloor nD\rfloor} \binom{d}{i}\cdot (2n)^i.
    \]
    Since the last term with $i=\lfloor nD\rfloor$ will dominate the summation for sufficiently large $d$, we approximate $N_{\max}(D)$ by the last term \[
    \binom{d}{\lfloor nD\rfloor}\cdot (2n)^{\lfloor nD\rfloor}\approx \frac{(2nd)^{\lfloor nD\rfloor}}{\lfloor nD\rfloor!}.
    \]
    Substitute the approximations above to \eqref{eq::converse_approx}, we obtain the lower bound \begin{equation}
    \label{eq::q_1_lb}
    C(D,0) \gtrsim \log \left(\frac{d^{nk}}{d^{\lfloor nD\rfloor}}\cdot \frac{\lfloor nD\rfloor !}{(nk)!\cdot (2n)^{\lfloor nD\rfloor}}\right).
    \end{equation}
    When $d$ is sufficiently large, such as the case in our numerical example in Fig.~\ref{fig::result_q_1}, this lower bound is dominated by the term involving $d$, which is $(nk-\lfloor nD\rfloor)\log d$.

    Recall that, by Corollary~\ref{cor::cover_size}, the communication cost achieved by the covering scheme when $q=1$ satisfies \begin{equation}
    \label{eq::q_1_ub}
    C(D,0) = n\log |\mc Y| \leq n\left(\lceil k-D\rceil \log d + 1\right).
    \end{equation}
    Since both the lower bound in \eqref{eq::q_1_lb} and the upper bound in \eqref{eq::q_1_ub} are approximately equal to $n(k-D)\log d$, the gap between these bounds is very small. This is consistent with the numerical results in Fig.~\ref{fig::result_q_1} for $d=2^{20}$ and $q=1$. Moreover, we can conclude that, \[
    C(D,0) \approx n(k-D)\log d.
    \] 
\end{remark}

\begin{remark}
    One observation from Figs.~\ref{fig::result_q_1} and \ref{fig::result_q_2_k_3} is that, when the distortion $D$ approaches $kq$, the upper bound achieved by the covering scheme and the lower bounds all approach $0$, and become exactly $0$ when $D=kq$.
    
    For the lower bounds, when $D\geq kq$,  we have $N_{\max}(D)=|\mathcal{U}|$ since for any $u,u'\in \mc U$ with $u=\frac{1}{n}\sum_{i=1}^n x_i$ and $u'=\frac{1}{n}\sum_{i=1}^n x_i'$, \[
    \|u-u'\|_1 \leq \frac{1}{n}\sum_{i=1}^n \|x_i-x_i'\|_1 \leq \frac{1}{n}\cdot n\cdot (kq) = kq.
    \]
    Therefore, from \eqref{eq::uniform_u_lb}, we have \[
    C(D,0) \geq \log \frac{|\mc U|}{N_{\max}(D)}  = 0.
    \]

    For the upper bound $C(\pi_{\it cover})$, by substituting $D=kq$ into \eqref{eq::yd_size}, we obtain that $|\mc Y(kq)|=1$, so that \[
    C(D,0)\leq C(\pi_{\it cover}) = n\log |\mc Y(kq)| = n\log 1 = 0.
    \]
    As such, in Figs.~\ref{fig::result_q_1} and \ref{fig::result_q_2_k_3}, the gap(s) between the upper bound achieved by the covering scheme and the lower bounds approach(es) $0$ when $D$ approaches $kq$.
\end{remark}

\section{Conclusion and Discussion}
\label{sec::discussion}
In this paper, we studied lossy compression for aggregating sparse local models in a distributed system. 
With a more communication-efficient covering scheme, we first improved the achievable communication-accuracy trade-off compared with the one using the Reed-Solomon code.
Furthermore, we showed that the communication costs of the two commonly used sketching methods can be reduced by combining them with the covering scheme. 
At the end of this paper, we proved a tighter converse of the communication-accuracy tradeoff by employing the $\mathrm{f}$-divergence. 

One important indication of our numerical result is that, to aggregate sparse local models in a distributed system, properly compressing each local model is important for achieving a communication-accuracy trade-off that is close to the lower bound provided by the converse. A similar observation was made in \cite{chen2023communication} under the privacy and security requirements. The gap between the achieved communication cost by the covering scheme and the lower bound we derived in Theorem~\ref{thm::converse_approx} is small when $q=1$. 
However, this does not hold for a general $q$, and the tighter lower bound obtained through a counting argument is difficult to compute due to the complexity involved. We leave these issues for further investigation in future work.

Another problem is that even though combining the covering scheme and sketching methods can reduce the communication cost when the distortion is above a certain threshold, the impacts of this approach on privacy and security are not studied in this paper. It would be interesting to see how the covering scheme affects the privacy in CS and CMS.

\clearpage
\bibliographystyle{IEEEtran}
\bibliography{ref.bib}


\end{document}